\newcommand{\Rmnum}[1]{\expandafter\@slowromancap\romannumeral #1@}
\newtheorem{theorem}{Theorem}
\newtheorem{lemma}{Lemma}
\newenvironment{proof}[1][Proof]{\begin{trivlist}
		\item[\hskip \labelsep {\itshape #1}]}{\end{trivlist}}
\newcommand{\qed}{\nobreak \ifvmode \relax \else
	\ifdim\lastskip<1.5em \hskip-\lastskip
	\hskip1.5em plus0em minus0.5em \fi \nobreak
	\vrule height0.75em width0.5em depth0.25em\fi}
\begin{document}

\title{Robust Design for Intelligent Reflecting Surface Assisted MIMO-OFDMA Terahertz Communications}
\author{Wanming Hao,~\IEEEmembership{Member,~IEEE,} Gangcan Sun, Ming Zeng,~\IEEEmembership{Member,~IEEE,} Zhengyu Zhu,~\IEEEmembership{Member,~IEEE,} Zheng Chu,~\IEEEmembership{Member,~IEEE,}  Octavia A. Dobre,~\IEEEmembership{Fellow,~IEEE,}  Pei Xiao,~\IEEEmembership{Senior Member,~IEEE,}


	\thanks{W. Hao, G. Sun, and Z. Zhu are with the School of Information Engineering, Zhengzhou University, Zhengzhou 450001, China. (E-mail: \{iewmhao, iegcsun, iezyzhu\}@zzu.edu.cn)}
	\thanks{M. Zeng and O. A. Dobre are with the Faculty of Engineering and Applied Science, Memorial University, St. Johns, NL A1B 3X9, Canada. (E-mail: \{mzeng, odobre\}@mun.ca)}
	\thanks{Z. Chu and P. Xiao are with the 5G Innovation Center, Institute of Communication Systems, University of Surrey, Guildford GU2 7XH, U.K. (Email: \{zheng.chu, p.xiao\}@surrey.ac.uk)}
}

%



\maketitle
\begin{abstract}
Recently, terahertz (THz) communication has drawn considerable attention as one of the promising technologies for the future wireless communications owning to its ultra-wide bandwidth. Nonetheless, one major obstacle that prevents the actual deployment of THz lies in its inherent huge attenuation.  Intelligent reflecting surface (IRS) and multiple-input multiple-output (MIMO) represent two effective solutions for compensating the large pathloss in THz systems.  In this paper, we consider an IRS-aided multi-user THz MIMO system with orthogonal frequency division multiple access, where the sparse radio frequency chain antenna structure is adopted for reducing the power consumption. The objective is to maximize the weighted sum rate via jointly optimizing the hybrid analog/digital beamforming at the base station and reflection matrix at the IRS. {Since the analog beamforming and reflection matrix need to cater all users and  subcarriers,  it is  difficult to directly solve the formulated problem, and thus, an alternatively iterative optimization algorithm is proposed.} Specifically,  the analog beamforming is designed by solving a MIMO capacity maximization problem, while the digital beamforming and reflection matrix optimization are both tackled using semidefinite relaxation technique. Considering that obtaining perfect channel state information (CSI) is a challenging task in IRS-based systems, we further explore the case with the imperfect CSI for the channels from the IRS to users. Under this setup, we propose a robust beamforming and reflection matrix design scheme for the originally formulated non-convex optimization problem. Finally, simulation results are presented to demonstrate the effectiveness of the proposed algorithms.
\end{abstract}

\begin{IEEEkeywords}
Hybrid beamforming, Intelligent Reflecting Surfaces, THz, Multiple-input multiple-output.
\end{IEEEkeywords}

%
\IEEEpeerreviewmaketitle

\section{Introduction}
With the rapid proliferation of various novel applications, such as virtual reality, augmented reality, and telemedicine, the data rate demands in future wireless communication are expected to grow explosively~\cite{Cisco}. As such, the sub-6 Gigahertz (GHz) and millimeter-wave (mmWave) may not be able to support these bandwidth-hungry applications. That being said, terahertz (THz) communication (0.1-10 THz) has been regarded as a promising technology to deal with the above problem due to its ultra-wide bandwidth~\cite{Sarieddeen_JSAC_2019,Hao_IEEESJ_2020}. However, there are two major shortcomings for THz communications, namely severe signal attenuation and poor diffraction\cite{Priebe_IEEETWC_2013}.

Multiple-input multiple-output (MIMO) has been recognized as an effective technology to enhance the THz signal strength owing to the high beamforming gain. Indeed, it has been shown that the signal strength grows linearly with the number of antennas at the base station (BS)~\cite{Lu_2014_JSTSP}. Meanwhile, the small wavelength in THz makes it easy to pack more antennas together, and form a massive MIMO array. This way, the problem of severe signal antenuation of THz can be substantially relieved. Nonetheless, the property of poor diffraction still makes THz vulnerable to blocking obstacles that break the line-of-sight (LoS) links. To address this problem, intelligent reflect surface (IRS) can be deployed to create additional links~\cite{Zeng1,Zeng2}, and thus, enhance the THz systems.  Being equipped with  a large number of reconfigurable passive elements~\cite{Wu_IEEETWC_2019,Chu_IEEEWCL_2020}, the IRS can reflect the incident signals to any direction via adjusting the phase shifts. As a result, when there is no direct link between the transmitter and receiver, communication can still be realized via building a reflective link with the help of the IRS as shown in Fig.~\ref{Systemfigure1}. Therefore, incorporating MIMO and IRS into the THz communication can effectively enhance the signal reception and reduce the probability of signal blockage.  

In this paper, we study a multi-user IRS-aided
THz MIMO system, where the BS employs sparse RF chain structure for lowering the circuit power consumption~\cite{Gao_IEEEJSAC_2016}. Meanwhile, considering that the wideband THz signals may suffer from frequency selective fading, orthogonal frequency division multiple (OFDM) is also adopted. Based on this system model, we design the hybrid analog/digital beamforming at the BS and the reflection matrix at the IRS for maximizing the weighted sum rate under perfect and imperfect channel state information (CSI). 

\subsection{Related Works}
The MIMO THz communication has become a research hotspot in recent years. Considering the large signal attenuation, Lin~{\it{et al.}} study the indoor short range MIMO THz communications~\cite{Lin_IEEETCOM_2015,Lin_IEEETWC_2016}. The authors propose a hybrid analog/digital beamforming to maximize the energy efficiency of the system. Busari~{\it{et al.}} consider three hybrid beamforming array structures, namely fully connection, subconnection and overlapped subarray~\cite{Busari_IEEETVT_2019}. Then, a single-path THz channel model is used to investigate the performance of the system under different array structures. Additionally, due to the ultra-wide bandwidth, frequency selective hybrid beamforming design in THz system is necessary. For example, Tan and Dai  first analyze the array gain loss in the wideband THz system and then propose a time delay network to obtain the near-optimal array gain\cite{Tan_IEEEGB_2019}. However, the complexity of the considered system is prohibitively high. Yuan~{\it{et al.}} build a 3-D wideband THz channel model and propose a two-stage hybrid analog/digital beamforming for maximizing the capacity of the system~\cite{Hang_IEEETCOM_2019}. After that, the imperfect CSI is also considered and a robust beamforming design scheme is developed.

In parallel, IRS has attracted  great attention in the past two years owning to its ability to enable cost-effective and energy-efficient communications. Wu and Zhang provide a basic IRS communication system model in~\cite{Wu_IEEETWC_2019}, based upon which the joint active beamforming at the BS and passive beamforming at the IRS is designed to minimize the system power consumption. 
In addition,  Ning~{\it{et al.}}  propose to apply THz to IRS~\cite{Ning_2019arXiv}, and consider the beam training and hybrid analog/digital beamforming. They propose two effective hierarchical codebooks and beamforming design schemes to obtain the near-optimal performance.  To study the performance of IRS in frequency-selective fading channels, Zhang~{\it{et al.}} consider a MIMO-OFDM system~\cite{Zhang_IEEEJSAC_2020}, where only one common set of IRS reflective matrix is designed for all subcarriers. Based on this, a new alternative optimization algorithm is proposed. Yang~{\it{et al.}} investigate the channel estimation and beamforming design problem in the IRS-based OFDM system~\cite{Yang_IEEETCOM_2020}, and propose a practical transmission protocol as well as channel estimation scheme. On this basis, a strategy of jointly optimizing power allocation and the reflection matrix is developed for maximizing the achievable rate.   

Although the THz and IRS techniques have been investigated in the literature, e.g., in \cite{Lin_IEEETCOM_2015,Lin_IEEETWC_2016,Busari_IEEETVT_2019,Tan_IEEEGB_2019,Hang_IEEETCOM_2019,Wu_IEEETWC_2019,Dai_arXiv_2020,Ning_2019arXiv,Zhang_IEEEJSAC_2020,Yang_IEEETCOM_2020}, most of them do not consider the hybrid beamforming at the BS for IRS communication~\cite{Wu_IEEETWC_2019,Dai_arXiv_2020,Ning_2019arXiv,Zhang_IEEEJSAC_2020,Yang_IEEETCOM_2020}. In fact, in a THz-based IRS communication system, the BS should employ a sparse RF antenna structure for reducing the power consumption and  the multiple subcarriers transmission technology should be adopted for overcoming the frequency selection channel fading. In this case, how to design the hybrid analog/digital beamforming at the BS and reflection matrix at the IRS catering to all subchannels will be challenging. In addition, how to obtain the perfect CSI remains a non-trivial task for IRS-based reflection links. For the direct link from the BS to users, the CSI can be readily estimated by conventional channel estimation methods. For the indirect link from the BS to the IRS, the CSI is also relatively easy to obtain since the locations of IRS and BS are fixed. However, the accurate CSIs of reflection links from the IRS to users are usually difficult to obtain due to the mobility of users. However, \cite{Wu_IEEETWC_2019,Dai_arXiv_2020,Ning_2019arXiv,Zhang_IEEEJSAC_2020,Yang_IEEETCOM_2020} all assume perfect CSI. Although Zhou~{\it{et al.}} investigate the robust beamforming design in an IRS system~\cite{Zhou_IEEEWCL_2020}, the conventional multiple antenna structure and single carrier scenario are considered. 
\subsection{Main Contributions}
To the best of our knowledge, this is the first work to consider the hybrid analog/digital beamforming in the IRS-aided THz MIMO-OFDMA under imperfect CSI, and the main contributions of this paper include:
\begin{itemize}
\item We  construct an IRS-aided THz MIMO-OFDMA communication system, where the BS employs sparse RF chain structure for reducing the circuit power consumption. First, we investigate the joint optimization of the hybrid beamforming at the BS and reflection matrix at the IRS for maximizing the weighted sum rate under perfect CSI. 
\item To solve the formulated non-trivial problem, we first initialize the reflection matrix. Since all subcarriers share one analog beamforming matrix, we ignore the multi-user interference and  obtain the analog beamforming  by
solving the corresponding  MIMO capacity optimization problem. We subsequently reformulate a multi-user weighted sum rate maximization problem to optimize the digital beamforming. With the help of successive convex approximation (SCA) and semidefinite relaxation (SDR) techniques, we propose an iterative algorithm to solve the digital beamforming that mitigates the multi-user interference. Next, we formulate the reflection matrix optimization problem under given hybrid analog/digital beamforming, and an iterative algorithm is proposed to solve it. The above procedure is repeated until convergence.   
\item Next, we assume that the perfect CSIs of reflection links cannot be obtained, and there exists bounded estimation error. We apply the same method to solve the analog beamforming. For the digital beamforming and reflection matrix, we develop a robust optimization scheme for the  weighted sum rate optimization problem relying on the $\mathcal{S}$-Procedure and the convex approximation techniques. Finally, our simulation results demonstrate the effectiveness of the proposed algorithms.
\end{itemize}

We organize the rest of this paper as follows. The system model and weighted sum rate optimization problem are introduced in~Section II. An alternatively iterative optimization algorithm is designed in Section~III. The imperfect CSIs from the IRS to users are considered and the corresponding optimization algorithm is developed in Section IV. Simulation results are presented in Section~VI. Finally, conclusions are drawn in Section~VII.

\textit{Notations}: We use the following notations throughout this paper: 
$(\cdot)^T$ and $(\cdot)^H$ denote the transpose and Hermitian transpose, respectively, $\|\cdot\|$ is the Frobenius norm, ${\mathbb{C}}^{x\times y}$ means the space of $x\times y$ complex matrix, {Re($\cdot$)} and {Tr($\cdot$)} denote real number operation and trace operation, respectively, and Diag($a_1,\ldots,a_n$) is a diagonal matrix.  $\angle(\cdot)$ represents the phase of a complex number.

\begin{figure}[t]
	\begin{center}
		\includegraphics[width=7.5cm,height=3cm]{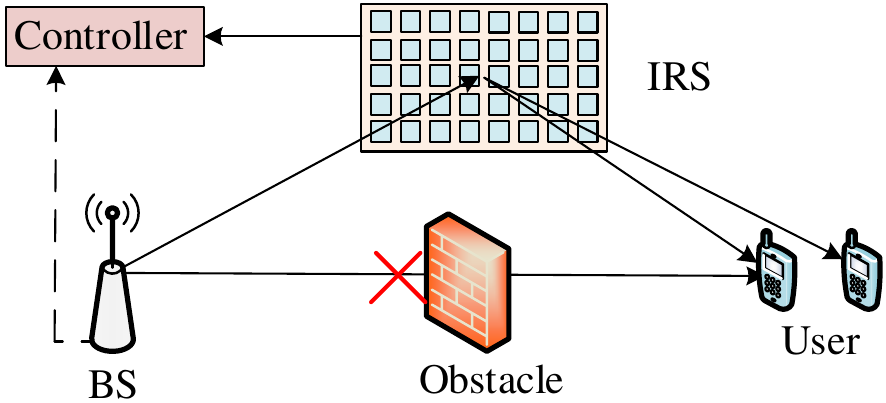}
		\caption{The IRS system model.}
		\label{Systemfigure1}
	\end{center}
\end{figure}  
\begin{figure}[t]
	\begin{center}
		\includegraphics[width=7.5cm,height=3cm]{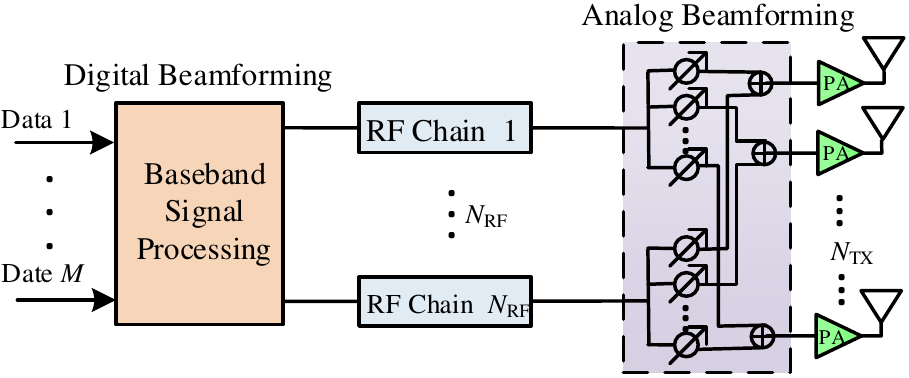}
		\caption{The sparse RF chain structure at the BS.}
		\label{Systemfigure3}
	\end{center}
\end{figure}

\section{System Model and Problem Formulation}
In this section, we first describe the IRS-aided THz MIMO-OFDMA system model and antenna structure. Next, we present the THz channel transmission model and corresponding parameters. Finally, we formulate the weighted sum rate maximization problem. 
\subsection{System Model}
We consider an  IRS-aided THz multi-user MIMO system with OFDMA as shown in Fig.~\ref{Systemfigure1}, where the BS is equipped with $N_{\rm{TX}}$ antennas and $N_{\rm{RF}}\;(N_{\rm{RF}}\leq N_{\rm{TX}})$ RF chains. The diagram of the sparse RF chain at the BS is illustrated in Fig.~\ref{Systemfigure3}. We assume that there are no direct links between BS and users due to the occlusion of walls or other obstacles, and the users can only receive the reflected signals from IRS.  Let $N_{\rm{IRS}}$, $M$ and $K$ denote the number of IRS elements, users and subcarriers, respectively.

The received signal on the $k$th subcarrier at the $m$th user can be expressed as
\begin{eqnarray}\label{eq1}
y_m[k]={\bf{G}}_m[k]{\bf{F}}{\bf{v}}_m[k]x_m[k]+\sum_{j\neq m}^{M}{\bf{G}}_m[k]{\bf{F}}{\bf{v}}_j[k]x_j[k]+n_m[k],
\end{eqnarray}
where ${\bf{G}}_m[k]=G_tG_r\eta_k{\widehat{\bf{g}}}_m[k]{\bf\Phi}{\widehat{\bf{H}}}[k]$, with $G_t$ and $G_r$ as the transmit and receive antenna gains, respectively, and $\eta_k$ as the pathloss compensation factor~\cite{Ning_2019arXiv}. ${\widehat{\bf{g}}}_m[k]\in {\mathbb{C}}^{1\times N_{\rm{IRS}}}$ denotes the channel vector from IRS to the $m$th user on the $k$th subcarrier, ${\bf\Phi}\in {\mathbb{C}}^{N_{\rm{IRS}}\times N_{\rm{IRS}}}$ is the reflection coefficient matrix with  ${\bf\Phi}={\rm{diag}}\{\phi_1,...,\phi_{N_{\rm{IRS}}}\}$, ${\widehat{\bf{H}}}[k]\in {\mathbb{C}}^{N_{\rm{IRS}}\times N_{\rm{TX}}}$ represents the channel matrix from BS to IRS on the $k$th subcarrier, ${\bf{F}}\in {\mathbb{C}}^{N_{\rm{TX}}\times N_{\rm{RF}}}$ is the analog beamforming matrix with ${\bf{F}}=[{\bf{f}}_1,...,{\bf{f}}_{N_{\rm{RF}}}]$, ${\bf{v}}_m[k]\in {\mathbb{C}}^{N_{\rm{RF}}\times 1}$  and $x_m[k]$ denote the digital beamforming and transmit signal for the $m$th user on the $k$th subcarrier,  respectively, $n_m[k]$ is the independent and identically distributed (i.i.d.) additive white Gaussian noise (AWGN)  with zero-mean and variance  $N_0$. In (\ref{eq1}), the first term is the designed signal, while the second term is the multi-user interference that must be mitigated by designing proper digital beamforming and reflection matrix.

Next, we present  the THz channel model. Let $f_c$ and $B$, respectively, represent the central frequency and bandwidth. Then, the frequency band of the $k$th subcarrier can be expressed as $f_k=f_c+\frac{B}{K}(k-1-\frac{K-1}{2}), k=1,2,...,K$. Although there are a few scattering components in THz communication, their power are much lower (more than 20 dB) than that of LoS component~\cite{Priebe_JCN_2013}, and thus, we only consider the LoS component and ignore the other scattering components. Accordingly, the channel matrix ${\widehat{\bf{H}}}[k]$ can be expressed as
\begin{eqnarray}
{\widehat{\bf{H}}}[k]=q(f_k,d){{\bf{H}}}[k],
\end{eqnarray} 
where $q(f_k,d)$ is the complex path gain satisfying
\begin{eqnarray}
	q(f_k,d)=\frac{c}{4\pi fd}e^{-\frac{1}{2}\tau(f_k)d},
\end{eqnarray}
where $c$ stands for the speed of light, $\tau(f_k)$ represents the medium absorption factor and $d$ is the distance from the BS to IRS~\cite{Cui_2019arXiv}. ${{\bf{H}}}[k]$ can be expressed as
\begin{eqnarray}
{{\bf{H}}}[k]={\bf{a}}_r(\theta_{k}){\bf{a}}_t^H(\varphi_{k}),
\end{eqnarray}
with ${\bf{a}}_t(\theta_{k})$ and ${\bf{a}}_r(\varphi_{k})$, respectively, as the antenna array response vector of the transmitter and receiver, namely
\begin{subequations}
\begin{align}
\;\;\;\;\;\;\;\;\;\;\;\;\;\;\;\;\;\;\;\;\;\;\;\;\;\;\;\;\;\;&{\bf{a}}_t(\theta_{k})=\frac{1}{\sqrt{N_{\rm{TX}}}}\left[1,e^{j\pi\theta_{k}},e^{j2\pi\theta_{k}},\cdots,e^{j(N_{\rm{TX}}-1)\pi\theta_{k}}\right]^T,\\
&{\bf{a}}_r(\varphi_{k})=\frac{1}{\sqrt{N_{\rm{IRS}}}}\left[1,e^{j\pi\varphi_{k}},e^{j2\pi\varphi_{k}},\cdots,e^{j(N_{\rm{IRS}}-1)\pi\varphi_{k}}\right]^T.
\end{align}
\end{subequations}
Here, $\theta_{k}=2d_0f_k\sin(\phi_t)/c$ and $\varphi_{k}=2d_0f_k\sin(\phi_r)/c$, $d_0$ denotes the antenna distance, and $\phi_t/\phi_r\in[-\pi/2,\pi/2]$ are, respectively, angle of departure (AoD) and angle of arrival (AoA). Similarly, ${\bf{g}}_m[k]$ can be expressed~as
\begin{eqnarray}
	{\widehat{\bf{g}}}_m[k]=q(f_k,d_m){{\bf{g}}}_m[k],
\end{eqnarray}
where ${{\bf{g}}}_m[k]=\frac{1}{\sqrt{N_{\rm{IRS}}}}\left[1,e^{j\pi\varphi_{k,m}},e^{j2\pi\varphi_{k,m}},\cdots,e^{j(N_{\rm{IRS}}-1)\pi\varphi_{k,m}}\right]$, and $q(f_k,d_m)$ is defined as 
\begin{eqnarray}
q(f_k,d_m)=\frac{c}{4\pi f_kd_m}e^{-\frac{1}{2}\tau(f_k)d_m},
\end{eqnarray}
with  $d_m$ as the distance from the IRS to the $m$th user. The BS-IRS-user $m$ link channel can such be expressed~as 
\begin{eqnarray}
	{\bf{G}}_m[k]=u_m[k]{{\bf{g}}}_m[k]{\bf\Phi}{{\bf{H}}}[k],
\end{eqnarray}
where $u_m[k]=G_tG_r\eta_kq(f,d_m)q(f_k,d_m)$. According to~\cite{Cui_2019arXiv}, the cascaded path loss of the BS-IRS-user link should satisfy 
\begin{eqnarray}
	\eta_k q(f_k,d)q(f_k,d_m)=\frac{\chi c}{8\sqrt{\pi^3}f_kdd_m}e^{-\frac{1}{2}\tau(f_k)(d+d_m)},
\end{eqnarray}
where $\chi$ is the IRS element gain. 

Finally, we rewrite~(\ref{eq1}) as
\begin{equation}\label{eq11}
\begin{aligned}
	y_m[k]=u_m[k]{{\bf{g}}}_m[k]{\bf\Phi}{{\bf{H}}}[k]{\bf{F}}{\bf{v}}_m[k]x_m[k]
	+\sum\nolimits_{j\neq m}^{M}u_m[k]{{\bf{g}}}_m[k]{\bf\Phi}{{\bf{H}}}[k]{\bf{F}}{\bf{v}}_j[k]x_j[k]+n_m[k].
	\end{aligned}
\end{equation}
\subsection{Problem Formulation}
By employing (\ref{eq11}), the achievable rate of the $m$th user on the $k$th subcarrier can be expressed~as
\begin{eqnarray}
	R_m[k]=\frac{B}{K}\log\left(1\!+\!\frac{\left|u_m[k]{{\bf{g}}}_m[k]{\bf\Phi}{{\bf{H}}}[k]{\bf{F}}{\bf{v}}_m[k]\right|^2}{\sum\nolimits_{j\neq m}^{M}\left|u_m[k]{{\bf{g}}}_m[k]{\bf\Phi}{{\bf{H}}}[k]{\bf{F}}{\bf{v}}_j[k]\right|^2\!+\!BN_0/K}\right),
\end{eqnarray}
and thus, the achievable sum rate for the $m$th user can be written~as 
\begin{eqnarray}
	R_m=\sum\nolimits_{k=1}^{K}R_m[k].
\end{eqnarray}

Next, we formulate the weighted sum rate maximization problem as follows
\setlength{\mathindent}{0cm}
\begin{subequations}\label{OptA}
	\begin{align}
	\;\;\;\;\;\;\;\;\;\;\;\;\;\;\;\;\;\;\;\;\;\;\;\;\;\;\;\;\;\;\;\;&\underset{\left\{{\bf\Phi},{\bf{F}},{\bf{v}}_m[k]\right\}}{\rm{max}}\;\sum\nolimits_{m=1}^{M}\alpha_mR_m \label{OptA0}\\
	{\rm{s.t.}}&\;|\phi_i|=1,i\in \{1,\cdots,N_{\rm{IRS}}\}, \label{OptA1}\\
	&\;\sum\nolimits_{M=1}^M\sum\nolimits_{k=1}^K||{\bf{F}}{\bf{v}}_m[k]||^2\leq P_{\rm{max}},\label{OptA2}\\
	&\;{\bf{F}}(i,j)=1/\sqrt{N_{\rm{TX}}},i\in \{1,\cdots,N_{\rm{TX}}\}, j\in\{1,\cdots,N_{\rm{RF}}\}, \label{OptA3}
	\end{align}
\end{subequations} 
where $\alpha_m$ denotes the weight of the $m$th user, (\ref{OptA1}) is the uni-modular constraint for each reflection coefficient $\phi_i$, (\ref{OptA2}) is the sum transmit power constraint, and  (\ref{OptA3}) is the amplitude constraint of analog beamforming. The objective of  (\ref{OptA}) is to jointly optimize the reflection matrix ${\bf\Phi}$ and hybrid analog/digital beamforming ${\bf{F}}$ and ${\bf{v}}_m[k]$ for maximizing the weighed sum rate of the system. Obviously, (\ref{OptA}) is a non-convex optimization problem due to the non-convex objective function (\ref{OptA0}) and constrains (\ref{OptA1}), (\ref{OptA3}). Finding the optimal solution is a challenging task, and we propose an effective alternatively iterative optimization algorithm to deal with it. 
\section{Solution of The Weighted Sum Rate Optimization Problem}
In this section, we propose an alternatively iterative optimization algorithm. First, we consider the hybrid analog/digital beamforming of ${\bf{F}}$, ${\bf{v}}_m[k]$ under given reflection matrix ${\bf\Phi}$. We formulate a MIMO-OFDM sum rate maximization problem and use it to obtain the  analog beamforming matrix ${\bf{F}}$. Next, we transform the original problem into a SDP and solve the digital beamforming ${\bf{v}}_m[k]$ by the SDR technique. Lastly, we solve the reflection matrix ${\bf\Phi}$ according to the obtained hybrid beamforming ${\bf{F}}$ and ${\bf{v}}_m[k]$. The above procedure is repeated until convergence. 
\subsection{Optimization of ${\bf{F}}$ and ${\bf{v}}_m[k]$ under Fixed ${\bf\Phi}$}\label{A1}
Under given ${\bf\Phi}$,  the original problem can be transformed as 
\setlength{\mathindent}{0cm}
\begin{subequations}\label{OptB}
	\begin{align}
	\;\;\;\;\;\;\;\;\;\;\;\;\;\;\;\;\;\;\;\;\;\;\;\;\;\;\;\;\;\;\;\;\;\;\;\;&\underset{\left\{{\bf{F}},{\bf{v}}_m[k]\right\}}{\rm{max}}\;\sum_{m=1}^{M}\sum_{k=1}^{K}a_m\log\left(1\!+\!\frac{\left|{\hat{\bf{h}}}_m[k]{\bf{F}}{\bf{v}}_m[k]\right|^2}{\sum\nolimits_{j\neq m}^{M}\left|{\hat{\bf{h}}}_m[k]{\bf{F}}{\bf{v}}_j[k]\right|^2\!+\!\delta^2}\right) \label{OptB0}\\
	{\rm{s.t.}}	&\;\rm(\ref{OptA2}),\rm(\ref{OptA3}),
	\end{align}
\end{subequations} 
where ${\hat{\bf{h}}}_m[k]=u_m[k]{{\bf{g}}}_m[k]{\bf\Phi}{{\bf{H}}}[k]$, $a_m=B\alpha_m/K$, and $\delta^2=BN_0/K$. 

Problem (\ref{OptB}) is still difficult to solve since each common element of the analog beamforming ${\bf{F}}$ needs to cater all users and subcarriers. Furthermore, the data streams of different users may own different priority weights, which leads to more complicated analog beamforming design. For simplicity, we assume that all users have the same weights and it can be  regarded as a MIMO-OFDM system by neglecting the inter-user interference. To solve the analog beamforming ${\bf{F}}$, we first  define ${\hat{\bf{H}}}[k]=[{\hat{\bf{h}}}_1[k]^T,...,{\hat{\bf{h}}}_M[k]^T]^T$, and reformulate a MIMO-OFDM sum rate maximization problem as follows:
\setlength{\mathindent}{0.5cm}
\begin{subequations}\label{OptC}
	\begin{align}
	\;\;\;\;\;\;\;\;\;\;\;\;\;\;\;\;\;\;\;\;\;\;\;\;\;\;\;\;\;\;\;\;&\underset{\left\{{\bf{F}},{\bf{V}}[k]\right\}}{\rm{max}}\;\frac{1}{K}\sum_{k=1}^{K}\log\left({\bf{I}}\!+\!\frac{{\hat{\bf{H}}}[k]{\bf{F}}{\bf{V}}[k]{\bf{V}}[k]^H{\bf{F}}^H{{\hat{\bf{H}}}[k]^H}}{\!\delta^2}\right) \label{OptC0}\\
	{\rm{s.t.}}	&\;||{\bf{F}}{\bf{V}}[k]||^2\leq P_{\rm{max}}/K, \rm(\ref{OptA3}). \label{OptC1}
	\end{align}
\end{subequations} 

Here, we consider the transmit power constraint for each subcarrier to obtain the low bound. Note that (\ref{OptC}) is only used for  designing ${\bf{F}}$. For a given ${\bf{F}}$, the optimal digital beamforming matrix can be calculated as~\cite{Park_IEEETWC_2017}
\begin{eqnarray}\label{eq16}
	{\bf{V}}[k]=({\bf{F}}^H{\bf{F}})^{-{1}/{2}}{\bf{U}}_e[k]{\bf{\Gamma}}_e[k],
\end{eqnarray}
where ${\bf{U}}_e[k]$  is the set of right singular vector according to the $N_{\rm{RF}}$ largest singular value  of ${\hat{\bf{H}}}[k]{\bf{F}}({\bf{F}}^H{\bf{F}})^{-{1}/{2}}$, and ${\bf{\Gamma}}_e[k]$ is a diagonal matrix of the power allocated to the data streams on each subcarrier. Here, we assume that there are $N_{\rm{RF}}$ data streams. In addition, it is obvious that ${\bf{f}}_i^H{\bf{f}}_i=1$, while ${\bf{f}}_i^H{\bf{f}}_j\ll1\;(i\neq j)$ with high probability for a large $N_{\rm{TX}}$. Therefore, the analog beamforming satisfies ${\bf{F}}^H{\bf{F}}\approx {\bf{I}}$ that can always be approximated as proportional to the identity matrix, namely ${\bf{F}}^H{\bf{F}} \propto {\bf{I}}$.  Moreover, in the high or moderate signal-to-noise ratio (SNR) regime, the equal power allocation scheme for all streams on each subcarrier can be adopted without significant performance degradation, namely ${\bf{\Gamma}}_e[k] \propto {\bf{I}}$~\cite{shen_IEEETSP_2019}. As a result, the digital beamforming matrix can be approximated as ${\bf{V}}[k]\approx \lambda {\bf{U}}_e[k]$, where $\lambda=\sqrt{P_{\rm{max}}/(KN_{\rm{TX}}N_{\rm{RF}})}$. Based on the above analysis, we have
\begin{equation}\label{eq18}
\begin{aligned}
	\;\;\;\;\;\;\;\;\;\;\;\;\;\;\;\;\;\;\;\;\;\;\;\;\;\;\;\;\;\;\;\;\;\;\;\;\;\;\;\;&\frac{1}{K}\sum_{k=1}^{K}\log\left({\bf{I}}\!+\!\frac{{\hat{\bf{H}}}[k]{\bf{F}}{\bf{V}}[k]{\bf{V}}[k]^H{\bf{F}}^H{{\hat{\bf{H}}}[k]^H}}{\!\delta^2}\right)\\
	=&\frac{1}{K}\sum_{k=1}^{K}\log\left({\bf{I}}\!+\!\frac{\lambda}{\!\delta^2}{\hat{\bf{H}}}[k]{\bf{F}}{\bf{U}}_e[k]{\bf{U}}_e[k]^H{\bf{F}}^H{{\hat{\bf{H}}}[k]^H}\right)\\
	=&\frac{1}{K}\sum_{k=1}^{K}\log\left({\bf{I}}\!+\!\frac{\lambda}{\!\delta^2}{\bf{F}}^H{\hat{\bf{H}}}[k]^H{{{\hat{\bf{H}}}[k]}\bf{F}}\right).
	\end{aligned}
\end{equation}

Finally, we can obtain the upper bound of (\ref{eq18}) using the Jensen's inequality as
\begin{eqnarray}\label{eq19}
\frac{1}{K}\sum_{k=1}^{K}\log\left({\bf{I}}\!+\!\frac{\lambda}{\!\delta^2}{\bf{F}}^H{\hat{\bf{H}}}[k]^H{{{\hat{\bf{H}}}[k]}\bf{F}}\right)\leq \log\left({\bf{I}}\!+\!\frac{\lambda}{\!\delta^2}{\bf{F}}^H{{{{\bf{\Sigma}}}}\bf{F}}\right),
\end{eqnarray}
where ${{{\bf{\Sigma}}}}=\frac{1}{K}\sum_{k=1}^{K}\left({\hat{\bf{H}}}[k]^H{\hat{\bf{H}}}[k]\right)$ and the analog beamforming matrix can be obtained by solving the following problem
\begin{eqnarray}\label{eq20}
{\bf{F}}^\star=\underset{{\bf{F}}(i,j)=1/\sqrt{N_{\rm{TX}}}}{\rm{arg\;max}}\;\;\;\;\;\log\left({\bf{I}}\!+\!\frac{\lambda}{\!\delta^2}{\bf{F}}^H{{{{\bf{\Sigma}}}}\bf{F}}\right).
\end{eqnarray}
Since ${{{\bf{\Sigma}}}}$ is a hermitian matrix, its singular value decomposition (SVD) can be written as
	${{{\bf{\Sigma}}}}={\bf{S}}{\bf{\Lambda}}{\bf{S}}^H.$
Therefore, the solution of (\ref{eq20}) can be given by ${\bf{F}}^\star(i,j)=\frac{1}{\sqrt{N_{\rm{TX}}}}e^{\angle\left({\bf{S}}_{1:N_{\rm{BF}}}(i,j)\right)}$, where ${\bf{S}}_{1:N_{\rm{BF}}}$ denotes the first $N_{\rm{BF}}$ columns of ${\bf{S}}$.

Note that although (\ref{eq16}) provides a digital beamforming solution, it is only suitable for the single user case. Therefore, after obtaining the analog beamforming matrix ${\bf{F}}$, problem (\ref{OptB}) can be transformed as follows for solving digital beamforming
\setlength{\mathindent}{0cm}
\begin{subequations}\label{OptD}
	\begin{align}
	\;\;\;\;\;\;\;\;\;\;\;\;\;\;\;\;\;\;\;\;\;\;\;\;\;\;\;\;\;\;\;\;\;\;\;\;\;&\underset{\left\{{\bf{v}}_m[k]\right\}}{\rm{max}}\;\sum_{m=1}^{M}\sum_{k=1}^{K}a_m\log\left(1\!+\!\frac{\left|{\bar{\bf{h}}}_m[k]{\bf{v}}_m[k]\right|^2}{\sum\nolimits_{j\neq m}^{M}\left|{\bar{\bf{h}}}_m[k]{\bf{v}}_j[k]\right|^2\!+\!\delta^2}\right) \label{OptD0}\\
	\;\;\;\;\;\;{\rm{s.t.}}	&\;\sum\nolimits_{M=1}^M\sum\nolimits_{k=1}^K||{\bf{F}}{\bf{v}}_m[k]||^2\leq P_{\rm{max}},
	\end{align}
\end{subequations} 
where ${\bar{\bf{h}}}_m[k]={\hat{\bf{h}}}_m[k]{\bf{F}}$. We define ${\bar{\bf{H}}}_m[k]={\bar{\bf{h}}}_m[k]^H{\bar{\bf{h}}}_m[k]$ and ${\bf{V}}_m[k]={\bf{v}}_m[k]{\bf{v}}_m[k]^H$, and by using an auxiliary variable $t_{m,k}$, we reformulate (\ref{OptD}) as
\begin{subequations}\label{OptE}
	\begin{align}
	\;\;\;\;\;\;\;\;\;\;\;\;\;\;\;\;\;\;\;\;\;\;\;\;\;\;\;\;\;\;\;\;\;\;\;\;\;\;\;&\underset{\left\{{\bf{V}}_m[k]\right\}}{\rm{max}}\;\sum_{m=1}^{M}\sum_{k=1}^{K}a_m\log\left(1\!+\!t_{m,k}\right) \label{OptE0}\\
	\;\;\;\;\;\;\;\;\;{\rm{s.t.}}	&\;t_{m,k}\leq \frac{{\rm{Tr}}({\bar{\bf{H}}}_m[k]{\bf{V}}_m[k])}{\sum\nolimits_{j\neq m}^{M}{\rm{Tr}}({\bar{\bf{H}}}_m[k]{\bf{V}}_j[k])\!+\!\delta^2},\label{OptE1}\\
	&\;\sum\nolimits_{M=1}^M\sum\nolimits_{k=1}^K{\rm{Tr}}({\bf{F}}^H{\bf{F}}{\bf{V}}_m[k])\leq P_{\rm{max}},\label{OptE2}\\
	&\;{\rm{rank}}({\bf{V}}_m[k])=1, {\bf{V}}_m[k]\succeq 0. \label{OptE3}
	\end{align}
\end{subequations} 

It is obvious that (\ref{OptE}) is a non-convex optimization problem due to (\ref{OptE1}) and  (\ref{OptE3}). To cope with (\ref{OptE1}), we introduce an auxiliary variable $b_{m,k}$ and transform it as
\begin{subequations}
\begin{align}
	\;\;\;\;\;\;\;\;\;\;\;\;\;\;\;\;\;\;\;\;\;\;\;\;\;\;\;\;\;\;\;\;\;\;\;\;\;\;\;\;\;\;\;\;\;\;\;\;\;\;\;\;\;\;\;\;\;&t_{m,k}b_{m,k}\leq {\rm{Tr}}({\bar{\bf{H}}}_m[k]{\bf{V}}_m[k]),\label{eq23A}\\
	&b_{m,k}\geq \sum\nolimits_{j\neq m}^{M}{\rm{Tr}}({\bar{\bf{H}}}_m[k]{\bf{V}}_j[k])\!+\!\delta^2. \label{eq23B}
\end{align}
\end{subequations}	

Now, we only need to deal with $t_{m,k}b_{m,k}$. According to~\cite{Song_IEEEICASSP_2016}, the upper bound of $t_{m,k}b_{m,k}$ can be obtained~as
\begin{eqnarray}
	\frac{t_{m,k}^{(n)}}{2b_{m,k}^{(n)}}b_{m,k}^2+\frac{b_{m,k}^{(n)}}{2t_{m,k}^{(n)}}t_{m,k}^2\geq t_{m,k}b_{m,k}, 
\end{eqnarray} 
where $t_{m,k}^{(n)}$ and $b_{m,k}^{(n)}$ are the values of $t_{m,k}$ and $b_{m,k}$ at the $n$th iteration, respectively. Consequently, we transform (\ref{eq23A}) into the following convex constraints
\begin{eqnarray}\label{eq25}
	\frac{t_{m,k}^{(n)}}{2b_{m,k}^{(n)}}b_{m,k}^2+\frac{b_{m,k}^{(n)}}{2t_{m,k}^{(n)}}t_{m,k}^2\leq {\rm{Tr}}({\bar{\bf{H}}}_m[k]{\bf{V}}_m[k]).
\end{eqnarray}

Finally, (\ref{OptE}) can be recast as the following SDP problem
\begin{subequations}\label{OptH}
	\begin{align}
	&\underset{\left\{t_{m,k}, b_{m,k},{\bf{V}}_m[k]\right\}}{\rm{max}}\;\sum_{m=1}^{M}\sum_{k=1}^{K}a_m\log\left(1\!+\!t_{m,k}\right) \label{OptH0}\\
\;\;\;\;\;\;\;\;\;\;\;\;\;\;\;\;\;\;\;\;\;\;\;\;\;\;\;\;\;\;\;\;\;\;\;\;\;\;\;\;\;{\rm{s.t.}}	&\;{\rm{(\ref{OptE2}), (\ref{eq23B}), (\ref{eq25})}}\label{OptH2},\\
	&\;{\rm{rank}}({\bf{V}}_m[k])=1, {\bf{V}}_m[k]\succeq 0. \label{OptH3}
	\end{align}
\end{subequations} 

Since the rank-one constraint is non-convex, we need to drop it and formulate a SDR problem that can be solved by  existing convex solvers such as the CVX toolbox. 
Summarily, to obtain the digital beamforming matrix ${\bf{V}}_m[k]$, we need to iteratively solve (\ref{OptH}). Specifically, we first initialize the auxiliary variables $b_{m,k}^{\rm{o}}$, $t_{m,k}^{\rm{o}}$ and solve (\ref{OptH}) for obtaining the optimal $b_{m,k}^\star$, $t_{m,k}^\star$ and ${\bf{V}}_m[k]^\star$. Next, $b_{m,k}^{\rm{o}}$ and  $t_{m,k}^{\rm{o}}$ are updated with the obtained $b_{m,k}^\star$ and $t_{m,k}^\star$, and then, we resolve  (\ref{OptH}). The above procedure is repeated until the results converge or the iteration number reaches its maximum value. In addition, since the SDR problem of (\ref{OptH}) is a convex optimization problem, the solutions are optimal for each iteration. Therefore, iteratively solving (\ref{OptH}) and updating variables increase or at least maintain the value of the objective function~\cite{Qi_IEEETVT_2016,Pan_IEEETWC_2018}. Given the limited transmit power, the designed iterative algorithm guarantees the value of the objective function to be a monotonically non-decreasing sequence with an upper bound, and it converges to a stationary solution that is at least a local optimal.   

For solving (\ref{OptH}), we remove the rank-one constraint ${\rm{rank}}({\bf{V}}_m[k])=1$. To explore the characteristic of the solutions, we first give the following theorem for the obtained digital beamforming  ${\bf{V}}^\star_m[k]$. 
\begin{theorem}\label{theorem1}
	For a large number of BS antenna $N_{\rm{TX}}$, the obtained digital beamforming ${\bf{V}}^\star_m[k]$ satisfies ${\rm{rank}}({\bf{V}}^\star_m[k])=1$.
\end{theorem}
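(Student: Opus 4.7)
The plan is to analyze the Karush--Kuhn--Tucker (KKT) conditions of the relaxed semidefinite program obtained from (\ref{OptH}) after dropping ${\rm{rank}}({\bf{V}}_m[k])=1$, and then combine the rank-one structure ${\bar{\bf{H}}}_m[k]={\bar{\bf{h}}}_m[k]^H{\bar{\bf{h}}}_m[k]$ with the large-array approximation ${\bf{F}}^H{\bf{F}}\approx {\bf{I}}_{N_{\rm{RF}}}$ that was already invoked after (\ref{eq16}). This is the standard route used for rank-one certification of SDRs in multiuser beamforming.

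First I would attach dual variables $\mu\geq 0$ to (\ref{OptE2}), $\nu_{m,k}\geq 0$ to (\ref{eq23B}), $\beta_{m,k}\geq 0$ to the SCA constraint (\ref{eq25}), and a Hermitian ${\bf{Z}}_m[k]\succeq {\bf{0}}$ to the positive semidefiniteness constraint on ${\bf{V}}_m[k]$. Writing the Lagrangian and differentiating with respect to ${\bf{V}}_m[k]$, stationarity yields
\[
{\bf{Z}}_m[k]=\mu {\bf{F}}^H{\bf{F}}+\sum\nolimits_{j\neq m}\nu_{j,k}{\bar{\bf{H}}}_j[k]-\beta_{m,k}{\bar{\bf{H}}}_m[k].
\]
The remaining KKT equations for the scalars $t_{m,k}$ and $b_{m,k}$ enter only through the magnitudes of $\nu_{m,k}$ and $\beta_{m,k}$, so they do not affect the rank count directly.

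The decisive step is to establish that ${\bf{A}}_m[k]\triangleq \mu {\bf{F}}^H{\bf{F}}+\sum_{j\neq m}\nu_{j,k}{\bar{\bf{H}}}_j[k]$ is strictly positive definite with rank $N_{\rm{RF}}$. The large-$N_{\rm{TX}}$ hypothesis makes the unit-modulus columns of ${\bf{F}}$ nearly orthogonal, so ${\bf{F}}^H{\bf{F}}\approx {\bf{I}}_{N_{\rm{RF}}}$. I would show $\mu>0$ by noting that the sum-power constraint must be active at optimality: if it were not, the joint rescaling ${\bf{V}}_m[k]\mapsto(1+\epsilon){\bf{V}}_m[k]$ for all $(m,k)$ would strictly increase every SINR expression because $\delta^2>0$, contradicting optimality. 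Since ${\bar{\bf{H}}}_m[k]$ is rank one, subtracting $\beta_{m,k}{\bar{\bf{H}}}_m[k]$ from the full-rank positive definite ${\bf{A}}_m[k]$ reduces the rank by at most one by Cauchy interlacing for Hermitian rank-one perturbations, so ${\rm{rank}}({\bf{Z}}_m[k])\geq N_{\rm{RF}}-1$. Complementary slackness ${\bf{Z}}_m[k]{\bf{V}}^\star_m[k]={\bf{0}}$ then forces the columns of ${\bf{V}}^\star_m[k]$ to lie in a subspace of dimension at most one, giving ${\rm{rank}}({\bf{V}}^\star_m[k])\leq 1$. The trivial solution ${\bf{V}}^\star_m[k]={\bf{0}}$ is excluded because $\alpha_m>0$ would then force an unnecessary loss in the weighted sum rate, so ${\rm{rank}}({\bf{V}}^\star_m[k])=1$.

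I expect the principal obstacle to be rigorously justifying that ${\bf{A}}_m[k]$ is full rank under the approximation ${\bf{F}}^H{\bf{F}}\approx {\bf{I}}$, since this identity is only asymptotic in $N_{\rm{TX}}$. One route is a deterministic bound on the off-diagonal entries of ${\bf{F}}^H{\bf{F}}$ of order $O(1/\sqrt{N_{\rm{TX}}})$ exploiting the near-incoherence of the dominant eigenvectors of ${\bf{\Sigma}}$ from (\ref{eq20}); another is to phrase the conclusion as holding in the large-array limit, consistent with the ``large $N_{\rm{TX}}$'' hypothesis in the statement. A smaller subtlety is to verify that $\beta_{m,k}$ cannot push an eigenvalue of ${\bf{A}}_m[k]-\beta_{m,k}{\bar{\bf{H}}}_m[k]$ below zero, but the KKT requirement ${\bf{Z}}_m[k]\succeq {\bf{0}}$ combined with interlacing automatically rules this out, so no additional argument is needed.
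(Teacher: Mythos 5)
Your proposal is correct and follows essentially the same route as the paper's Appendix A: form the Lagrangian of the relaxed SDP, use stationarity to write the dual slack matrix as a full-rank positive definite term (via ${\bf{F}}^H{\bf{F}}\approx{\bf{I}}$ for large $N_{\rm{TX}}$) minus a rank-one perturbation, conclude its rank is at least $N_{\rm{RF}}-1$, and invoke complementary slackness to pin ${\rm{rank}}({\bf{V}}^\star_m[k])=1$ after excluding the zero solution. The only difference is cosmetic: you explicitly justify that the power constraint is active (hence $\mu>0$) via a rescaling argument, a step the paper simply asserts.
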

\begin{proof}
	Refer to Appendix A.
\end{proof}

In fact, even for a medium number of BS antennas, such as $N_{\rm{TX}}=16$, we find that the optimal ${\bf{V}}^\star_m[k]$ always satisfies ${\rm{rank}}({\bf{V}}^\star_m[k])=1$. If ${\rm{rank}}({\bf{V}}^\star_m[k])=1$, the optimal ${\bf{v}}^\star_m[k]$ can be recovered by the eigenvalue decomposition, namely ${\bf{V}}^\star_m[k]={\bf{v}}^\star_m[k]^H{\bf{v}}^\star_m[k]$. If the obtained ${\bf{V}}^\star_m[k]$ is not a rank-one matrix, the Gaussian randomization technique is typically used to obtain a rank-one solution~\cite{Luo_IEEESPM_2010}.  Specifically, we first generate a random vector $\hat{{\bf{v}}}_m[k]$ satisfying  $\hat{{\bf{v}}}_m[k]\sim {\mathcal{CN}}({\bf{0}},{\bf{V}}^\star_m[k])$. Then, we apply $\hat{{\bf{v}}}_m[k]$ to problem~(\ref{OptH}) and check its feasibility. Here, we need to independently generate enough feasible beamforming vectors $\hat{{\bf{v}}}_m[k]$ and select the optimal one $\hat{{\bf{v}}}_m[k]^\star$ from all the random vectors. Based on this, the final digital beamforming vector can be approximated as ${{\bf{v}}}_m[k]^\star=\hat{{\bf{v}}}_m[k]^\star$.

Now, we analyze the computational complexity for solving problem (\ref{OptH}). Let $\epsilon$ be the iteration accuracy; then, the number of iterations is on the order of $\sqrt{KMN_{\rm{RF}}+5KM+1}\ln(1/\epsilon)$~\cite{Chi_IEEETSP_2014}. For problem~(\ref{OptH}), there are $KM$ linear matrix inequality (LMI) constraints of size $N_{\rm{RF}}$, $3KM+1$ LMI constraints of size $1$, and $KM$ second-order cone constraints. Therefore, the complexity of solving problem (\ref{OptH}) is on the order of 
$\sqrt{\Delta_1}\zeta\ln(1/\epsilon)(\Delta_2+\zeta\Delta_3+\zeta^2)$,
where $\Delta_1=KMN_{\rm{RF}}+5KM+1$, $\Delta_2=KMN_{\rm{RF}}^3+12KM+1$, $\Delta_3=KMN_{\rm{RF}}^2+3KM+1$, and $\zeta=KMN_{\rm{RF}}^2$.

\subsection{Optimization of $\bf{\Phi}$ under Fixed ${\bf{F}}$ and ${\bf{V}}_m[k]$}\label{B}
After obtaining the hybrid analog/digital beamforming ${\bf{F}}$ and ${\bf{v}}_m[k]$, we consider the reflection matrix of IRS here and transform (\ref{OptA}) into the following optimization problem
\setlength{\mathindent}{0cm}
\begin{subequations}\label{OptI}
	\begin{align}
	\;\;\;\;\;\;\;\;\;\;\;\;\;\;\;\;\;\;\;\;\;\;\;\;\;\;\;\;\;\;\;\;\;\;&\underset{\left\{{\bf{\Phi}}\right\}}{\rm{max}}\;\sum_{m=1}^{M}\sum_{k=1}^{K}a_m\log\left(1\!+\!\frac{\left|{{\bf{g}}}_m[k]{\bf\Phi}{\bf{z}}_m[k]\right|^2}{\sum\nolimits_{j\neq m}^{M}\left|{{\bf{g}}}_m[k]{\bf\Phi}{\bf{z}}_j[k]\right|^2\!+\!\delta^2}\right) \label{OptI0}\\
	&{\rm{s.t.}}\;\;|\phi_i|=1,i\in \{1,\cdots,N_{\rm{IRS}}\}, \label{OptI1}
	\end{align}
\end{subequations} 
where ${\bf{z}}_m[k]=u_m[k]{{\bf{H}}}[k]{\bf{F}}{\bf{v}}_m[k]$. Let ${{\bf{g}}}_m[k]{\bf\Phi}{\bf{z}}_m[k]={\bf{c}}_{m,k}{\bm{\phi}}$, where ${\bm{\phi}}=[\phi_1,...,\phi_{N_{\rm{IRS}}}]^T$ and ${\bf{c}}_m[k]={{\bf{g}}}_m[k]{\rm{diag}}({\bf{z}}_m[k])$. Thus, we can reformulate (\ref{OptI}) as
\begin{subequations}\label{OptJ}
	\begin{align}
	\;\;\;\;\;\;\;\;\;\;\;\;\;\;\;\;\;\;\;\;\;\;\;\;\;\;\;\;\;\;\;\;\;\;&\underset{\left\{{\bm{\phi}}\right\}}{\rm{max}}\;\sum_{m=1}^{M}\sum_{k=1}^{K}a_m\log\left(1\!+\!\frac{\left|{\bf{c}}_m[k]{\bm{\phi}}\right|^2}{\sum\nolimits_{j\neq m}^{M}\left|{\bf{c}}_j[k]{\bm{\phi}}\right|^2\!+\!\delta^2}\right) \label{OptJ0}\\
	&{\rm{s.t.}}\;\;|\phi_i|=1,i\in \{1,\cdots,N_{\rm{IRS}}\}. \label{OptJ1}
	\end{align}
\end{subequations} 

Similarly, we define ${\bf{C}}_m[k]={\bf{c}}_m[k]^H{\bf{c}}_m[k]$ and ${\bf{\Omega}}={\bm{\phi}}{\bm{\phi}}^H$,  and transform (\ref{OptJ}) into the following optimization problem
\begin{subequations}\label{OptM}
	\begin{align}
\;\;\;\;\;\;\;\;\;\;\;\;\;\;\;\;\;\;\;\;\;\;\;\;\;\;\;\;\;\;\;\;	\;\;&\underset{\left\{{\bm{\Omega}}\right\}}{\rm{max}}\;\sum_{m=1}^{M}\sum_{k=1}^{K}a_m\log\left(1\!+\!\frac{{\rm{Tr}}({\bf{C}}_m[k]{\bf{\Omega}})}{\sum\nolimits_{j\neq m}^{M}{\rm{Tr}}({\bf{C}}_j[k]{\bf{\Omega}})\!+\!\delta^2}\right) \label{OptM0}\\
	&{\rm{s.t.}}\;\;{\bf{\Omega}}(i,i)=1,i\in \{1,\cdots,N_{\rm{IRS}}\}, \label{OptM1}\\
	&\;\;\;\;\;\;\;{\rm{rank}}({\bf{\Omega}})=1, {\bf{\Omega}}\succeq 0.
	\end{align}
\end{subequations} 

One can observe that (\ref{OptM}) has a similar form to (\ref{OptE}), and we can adopt the same iterative method proposed in Section.~\ref{A1}  to solve it. We omit the details due to the space limitation and directly write the SDR problem as follows:
\begin{subequations}\label{OptL}
	\begin{align}
\;\;\;\;\;\;\;\;\;\;\;\;\;\;\;\;\;\;\;\;\;\;\;\;\;\;\;\;\;\;\;\;\;\;\;\;\;\;\;&\underset{\left\{t'_{m,k}, b'_{m,k},{\bf{\Omega}}\right\}}{\rm{max}}\;\sum_{m=1}^{M}\sum_{k=1}^{K}a_m\log\left(1\!+\!t'_{m,k}\right) \label{OptL0}\\
	{\rm{s.t.}}	&\;b'_{m,k}\geq \sum\nolimits_{j\neq m}^{M}{\rm{Tr}}({{\bf{C}}}_j[k]{\bf{\Omega}})\!+\!\delta^2,\\
	&\;\frac{{t'}_{m,k}^{(n)}}{2{b'}_{m,k}^{(n)}}{b'}_{m,k}^{2}+\frac{{b'}_{m,k}^{(n)}}{2{t'}_{m,k}^{(n)}}{t'}_{m,k}^{2}\leq {\rm{Tr}}({\bf{C}}_m[k]{\bf{\Omega}}),\\
	&\;{\bf{\Omega}}(i,i)=1, {\bf{\Omega}}\succeq 0. \label{OptL3}
	\end{align}
\end{subequations} 

Here, we still need to alternatively solve problem~(\ref{OptL}) to obtain the reflection matrix ${\bf{\Phi}}$. If the obtained ${\bm{\phi}}^\star$ satisfies ${\rm{rank}}({\bf{\Omega}}^\star)=1$, the optimal ${\bm{\phi}}^\star$ can be recovered by the eigenvector of  ${\bf{\Omega}}^\star={\bm{\phi}}^\star{\bm{\phi}}^{\star H}$, and the reflection coefficient matrix ${\bf{\Phi}}^\star$ can be expressed as ${\bf\Phi}^\star ={\rm{diag}}\{\phi_1^\star,...,\phi_{N_{\rm{IRS}}}^\star\}$. Otherwise, the Gaussian randomization technique can be  used to obtain a rank-one solution~\cite{Luo_IEEESPM_2010}. 

Now, we analyze the computational complexity for solving problem (\ref{OptL}). Let $\epsilon$ be the iteration accuracy; then, the number of iterations is on the order of $\sqrt{2N_{\rm{IRS}}+5KM}\ln(1/\epsilon)$. For problem~(\ref{OptL}), there are 1 LMI constraint of size $N_{\rm{IRS}}$, $3KM+N_{\rm{IRS}}$ LMI constraint of size $1$, and $KM$ second-order cone constraints. Therefore, the complexity of solving problem (\ref{OptL}) is on the order of 
$\sqrt{\Delta_1}\zeta\ln(1/\epsilon)(\Delta_2+\zeta\Delta_3+\zeta^2)$,
where $\Delta_1=2N_{\rm{IRS}}+5KM$, $\Delta_2=N_{\rm{IRS}}^3+N_{\rm{IRS}}+12KM$, $\Delta_3=N_{\rm{IRS}}^2+N_{\rm{IRS}}+3KM$, and $\zeta=N_{\rm{IRS}}^2$. 

Finally, we summarize the proposed alteratively iterative optimization scheme in Algorithm~\ref{algorithm1}.
\begin{algorithm}[t]
	{\caption{The Proposed Alternatively Iterative Optimization Algorithm.}
		\label{algorithm1}
		Initialize  the reflection matrix ${\bf{\Phi}}^{(0)}$, iteration number $r=1$ and maximum iteration number $r_{\rm{max}}$.\\
		\Repeat{$r=r_{\rm{max}}$ {\rm{or} Convergence}}{
			Obtain the analog beamforming ${\bf{F}}^{(r)}$ according to (\ref{eq20}).\\
		    Initialize variables $t_{m,k}^{(0)}$, $b_{m,k}^{(0)}$, iteration number $r'=1$ and maximum iteration number $r'_{\rm{max}}$.\\		
			\Repeat{$r'=r'_{\rm{max}}$ {\rm{or} Convergence}}{
			Obtain $t_{m,k}^{\star}$, $b_{m,k}^{\star}$ and ${\bf{v}}_m[k]^{\star}$ by solving (\ref{OptH}).\\
		    Update variables $t_{m,k}^{(r')}\leftarrow t_{m,k}^{\star}$, $b_{m,k}^{(r')}\leftarrow b_{m,k}^{\star}$.\\
		    Update $r'\leftarrow r'+1$.\\}
		    Initialize variables ${t'}_{m,k}^{(0)}$, ${b'}_{m,k}^{(0)}$, iteration number $r{''}=1$ and maximum iteration number $r{''}_{\rm{max}}$.\\		
		    \Repeat{$r{''}=r{''}_{\rm{max}}$ {\rm{or} Convergence}}{
	    	Obtain ${t'}_{m,k}^{\star}$, ${b'}_{m,k}^{\star}$ and ${\bf{\Omega}}^{\star}$ by solving (\ref{OptL}).\\
	    	Update variables ${t'}_{m,k}^{(r{''})}\leftarrow {t'}_{m,k}^{\star}$, ${b'}_{m,k}^{(r{''})}\leftarrow {b'}_{m,k}^{\star}$.\\
	    	Update $r{''}\leftarrow r{''}+1$.\\}
    	    Obtain ${\bf{\Phi}}^\star$ according to ${\bf{\Omega}}^{\star}$.\\
    	    Update ${\bf{\Phi}}^{(r)}\leftarrow {\bf{\Phi}}^\star$.\\
			Update $r\leftarrow r+1$.\\
		}
	Obtain the analog beamforming ${\bf{\Phi}}^{(r)}$}, digital beamforming ${\bf{v}}_m[k]^{(r)}$ and reflection matrix ${\bf{\Phi}}^{(r)}$.
\end{algorithm} 
\section{Extension to Imperfect CSIs from IRS to Users}
Due to the mobility of users, it is difficult to obtain the perfect CSIs from the IRS to users. Therefore, in this section, we assume that there exists channel estimation error for the reflection links between IRS and users, namely ${{\bf{g}}}_m[k]={\tilde{\bf{g}}}_m[k]+\triangle{{\bf{g}}}_m[k]$, where ${\tilde{\bf{g}}}_m[k]$ denotes the estimated CSI and $\triangle{{\bf{g}}}_m[k]$ is the estimation error. Here, we assume that the estimation error is upper bounded with ${\tilde{\bf{g}}}_m[k]{\tilde{\bf{g}}}_m[k]^H\leq \varepsilon$. Next, we redesign the hybrid analog/digital beamforming and reflection matrix with imperfect CSI as follows
\begin{subequations}\label{OptG}
	\begin{align}
	\;\;\;\;\;\;\;\;\;\;\;\;\;\;\;\;\;\;\;\;\;\;\;\;\;\;\;\;\;\;\;\;&\underset{\left\{{\bf\Phi},{\bf{F}},{\bf{v}}_m[k]\right\}}{\rm{max}}\;\sum\nolimits_{m=1}^{M}\alpha_mR_m \label{OptG0}\\
	{\rm{s.t.}}&\;|\phi_i|=1,i\in \{1,\cdots,N_{\rm{IRS}}\}, \label{OptG1}\\
	&\;\sum\nolimits_{M=1}^M\sum\nolimits_{k=1}^K||{\bf{F}}{\bf{v}}_m[k]||^2\leq P_{\rm{max}},\label{OptG2}\\
	&\;{{\bf{g}}}_m[k]={\tilde{\bf{g}}}_m[k]+\triangle{{\bf{g}}}_m[k],\label{OptG3}\\
	&\;{\tilde{\bf{g}}}_m[k]{\tilde{\bf{g}}}_m[k]^H\leq \varepsilon,\label{OptG4}\\
	&\;{\bf{F}}(i,j)=1/\sqrt{N_{\rm{TX}}},i\in \{1,\cdots,N_{\rm{TX}}\}, j\in\{1,\cdots,N_{\rm{RF}}\}. \label{OptG5}
	\end{align}
\end{subequations} 
Similar to the proposed Algorithm~\ref{algorithm1}, we design an alternatively iterative optimization scheme to solve it. 
\subsection{Optimization of  ${\bf{F}}$ and ${\bf{v}}_m[k]$ under Fixed ${\bf\Phi}$}
In fact, it is extremely difficult to design analog beamforming ${\bf{F}}$ under channel estimation error, and thus, we ignore the estimation error and define ${{\bf{g}}}_m[k] \triangleq {\tilde{\bf{g}}}_m[k]$ for simplicity, namely ${\hat{\bf{h}}}_m[k]=u_m[k]{\tilde{\bf{g}}}_m[k]{\bf\Phi}{{\bf{H}}}[k]$. In this way, we can adopt the same scheme used in Section~\ref{A1} to obtain the  analog beamforming ${\bf{F}}$. Next, we directly solve the digital beamforming ${\bf{v}}_m[k]$. After obtaining ${\bf{F}}$, we define $u_m[k]{{\bf{g}}}_m[k]{\bf\Phi}{{\bf{H}}}[k]{\bf{F}}\triangleq {{\bf{g}}}_m[k]{\bf{\Xi}}_m[k]$, where ${\bf{\Xi}}_m[k]=u_m[k]{\bf\Phi}{{\bf{H}}}[k]{\bf{F}}$. Thus, the received signal of the $m$th user on the $k$th subcarrier can be expressed as
\begin{equation}\label{eq32}
\begin{aligned}
y_m[k]=({\tilde{\bf{g}}}_m[k]+\triangle{{\bf{g}}}_m[k]){\bf{\Xi}}_m[k]{\bf{v}}_m[k]x_m[k]
+\sum\nolimits_{j\neq m}^{M}({\tilde{\bf{g}}}_m[k]+\triangle{{\bf{g}}}_m[k]){\bf{\Xi}}_m[k]{\bf{v}}_j[k]x_j[k]+n_m[k],
\end{aligned}
\end{equation}
and the achievable rate is given by
\begin{eqnarray}\label{eq33}
R_m[k]=\frac{B}{K}\log\left(1\!+\!\frac{\left|({\tilde{\bf{g}}}_m[k]\!+\!\triangle{{\bf{g}}}_m[k]){\bf{\Xi}}_m[k]{\bf{v}}_m[k]\right|^2}{\sum\nolimits_{j\neq m}^{M}\Upsilon_{m}[k]\!+\!BN_0/K}\right),
\end{eqnarray}
where $\Upsilon_{m}[k]=\left|({\tilde{\bf{g}}}_m[k]\!+\!\triangle{{\bf{g}}}_m[k]){\bf{\Xi}}_m[k]{\bf{v}}_j[k]\right|^2$. By introducing an auxiliary variable $\tau_{m,k}$, the original problem (\ref{OptG}) can be recast~as
\begin{subequations}\label{OptK}
	\begin{align}
	\;\;\;\;\;\;\;\;\;\;\;\;\;\;\;\;\;\;\;\;\;\;\;\;\;\;\;\;\;\;\;\;\;\;\;\;\;\;\;&\underset{\left\{{\bf{v}}_m[k]\right\}}{\rm{max}}\;\sum_{m=1}^{M}\sum_{k=1}^{K}a_m\log\left(1\!+\tau_{m,k}\right) \label{OptK0}\\
	{\rm{s.t.}}	&\;\tau_{m,k}\leq \frac{\left|({\tilde{\bf{g}}}_m[k]\!+\!\triangle{{\bf{g}}}_m[k]){\bf{\Xi}}_m[k]{\bf{v}}_m[k]\right|^2}{\sum\nolimits_{j\neq m}^{M}\Upsilon_{m}[k]\!+\!BN_0/K},\label{OptK1}\\
	&\;\;\triangle{{\bf{g}}}_m[k]\triangle{{\bf{g}}}_m[k]^H\leq \varepsilon,\label{OptK2}\\
	&\;\sum\nolimits_{M=1}^M\sum\nolimits_{k=1}^K||{\bf{F}}{\bf{v}}_m[k]||^2\leq P_{\rm{max}}.
	\end{align}
\end{subequations} 

By comparing the weighted sum rate maximization problem under perfect CSI (\ref{OptE}) and imperfect CSI (\ref{OptK}), one can observe that the original scheme used to solve (\ref{OptE}) can not be directly used to solve (\ref{OptK}) due to the uncertain term  $\triangle{{\bf{g}}}_m[k]$. To cope with~(\ref{OptK}), we first give the following  Lemma~\ref{lemma1} referred to as $\mathcal{S}$-Procedure~\cite{Zhou_IEEETWC_2017}.
\begin{lemma}\label{lemma1}
	Define the function
	\begin{eqnarray}
	f_i({\bf{x}})={\bf{x}}{\bf{Q}}_i{\bf{x}}^H+2{\rm{Re}}\{{\bf{p}}_i{\bf{x}}^H\}+e_i, i\in\{1,2\},
	\end{eqnarray}
	where ${\bf{x}}\in{\mathbb{C}}^{1\times O}$, ${\bf{Q}}_i\in{\mathbb{C}}^{O\times O}$, ${\bf{p}}_i\in{\mathbb{C}}^{1\times O}$, and $e_i\in\mathbb{R}$ with $O$ representing any integer, and thus, the expression $f_1({\bf{x}})\leq 0\Rightarrow f_2({\bf{x}})\leq 0$ holds if and only if there exists a $\beta$ satisfying
	\begin{eqnarray}\label{C121}
	\beta\left[ \begin{array}{ccc}
	{\bf{Q}}_1 & {\bf{p}}_1^H \\
	{\bf{p}}_1 & e_1
	\end{array} 
	\right ]-\left[ \begin{array}{ccc}
	{\bf{Q}}_2 & {\bf{p}}_2^H \\
	{\bf{p}}_2 & e_2
	\end{array} 
	\right ]\succeq {\bf{0}}.
	\end{eqnarray}  
\end{lemma}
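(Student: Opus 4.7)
The plan is to treat this as the standard complex-valued $\mathcal{S}$-Lemma (Yakubovich's $\mathcal{S}$-Procedure) and argue the two directions separately. First, I homogenize both quadratic forms by writing
\begin{equation*}
f_i(\mathbf{x}) = [\mathbf{x}\;\; 1]\,\mathbf{M}_i\,[\mathbf{x}\;\; 1]^H,\qquad \mathbf{M}_i=\begin{bmatrix}\mathbf{Q}_i & \mathbf{p}_i^H\\ \mathbf{p}_i & e_i\end{bmatrix},
\end{equation*}
so that the claimed LMI is exactly $\beta \mathbf{M}_1 - \mathbf{M}_2 \succeq \mathbf{0}$. This reformulation reduces the lemma to a statement about two Hermitian forms on $\mathbb{C}^{O+1}$ restricted (in the necessity direction) to vectors whose last entry is fixed at $1$.

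Sufficiency is immediate: multiplying $\beta \mathbf{M}_1 - \mathbf{M}_2 \succeq \mathbf{0}$ on the left and right by $[\mathbf{x}\;\;1]$ and its conjugate transpose yields $\beta f_1(\mathbf{x}) - f_2(\mathbf{x}) \geq 0$ for every $\mathbf{x}$; combining with $\beta\geq 0$ and $f_1(\mathbf{x})\leq 0$ gives $f_2(\mathbf{x})\leq \beta f_1(\mathbf{x})\leq 0$. The nontrivial direction is necessity, where I would use the convexity of the joint numerical range of two Hermitian forms over $\mathbb{C}^n$. Specifically, the set $\mathcal{R}=\{(f_1(\mathbf{x}),f_2(\mathbf{x})):\mathbf{x}\in\mathbb{C}^O\}\subset\mathbb{R}^2$ is convex; in the complex case this is a classical consequence of the Toeplitz--Hausdorff theorem applied to the pair $(\mathbf{M}_1,\mathbf{M}_2)$ and is stronger than the real Dines theorem (so no extra rank condition is needed). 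Under the hypothesis of the lemma, $\mathcal{R}$ is disjoint from the open convex cone $\mathcal{K}=\{(s,t)\in\mathbb{R}^2:s\leq 0,\,t>0\}$.

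Assuming Slater's condition (there exists $\mathbf{x}_0$ with $f_1(\mathbf{x}_0)<0$), I apply the separating hyperplane theorem to $\mathcal{R}$ and $\mathcal{K}$ to obtain $(\mu,\nu)\neq\mathbf{0}$ with $\mu\geq 0$, $\nu\geq 0$ satisfying
\begin{equation*}
\mu f_1(\mathbf{x}) - \nu f_2(\mathbf{x}) \leq 0 \quad \forall \mathbf{x}\in\mathbb{C}^O.
\end{equation*}
Slater's condition rules out $\nu=0$ (otherwise $\mu f_1(\mathbf{x}_0)\leq 0$ would force $\mu=0$), so I can normalize $\nu=1$ and set $\beta=\mu\geq 0$. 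Unwinding the homogenization, the inequality $\beta f_1(\mathbf{x})-f_2(\mathbf{x})\geq 0$ for all $\mathbf{x}\in\mathbb{C}^O$ is equivalent to $[\mathbf{x}\;\;1](\beta\mathbf{M}_1-\mathbf{M}_2)[\mathbf{x}\;\;1]^H\geq 0$; extending to arbitrary scaling of the last coordinate (by scaling $\mathbf{x}$) and using Hermiticity yields $\beta\mathbf{M}_1-\mathbf{M}_2\succeq \mathbf{0}$, which is the stated matrix inequality.

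The main obstacle is the necessity direction, and within it two points deserve care: (i) verifying convexity of the joint range $\mathcal{R}$ for two complex Hermitian forms, which I would cite from standard convex analysis rather than reprove, and (ii) handling the case where Slater's condition fails, which I would treat by a perturbation argument (replace $\mathbf{Q}_1$ by $\mathbf{Q}_1-\epsilon\mathbf{I}$ to force strict feasibility, derive the LMI with a multiplier $\beta_\epsilon$, and pass to the limit $\epsilon\downarrow 0$ after showing $\{\beta_\epsilon\}$ remains bounded). The homogenization step is the conceptual bridge from scalar implication to matrix inequality, and once convexity of $\mathcal{R}$ is in hand the rest is a straightforward application of separation.
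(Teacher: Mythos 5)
The paper does not actually prove this lemma: it is quoted as the known complex $\mathcal{S}$-Procedure and justified solely by a citation, so your proposal is necessarily a different route --- you are supplying the classical proof that the paper imports as a black box. Your outline is the standard one (homogenization, congruence for sufficiency, convexity of a joint range plus hyperplane separation for necessity), and it correctly flags two hypotheses that the paper's statement silently omits: $\beta\geq 0$, without which the ``if'' direction is false, and Slater's condition $f_1(\mathbf{x}_0)<0$, without which the ``only if'' direction can fail. Two points need tightening. First, your separation inequality has the wrong sign: separating $\mathcal{R}$ from $\mathcal{K}=\{(s,t):s\leq 0,\ t>0\}$ produces a linear functional that is nonnegative on the closure of $\mathcal{K}$ and nonpositive on $\mathcal{R}$, i.e. $\nu f_2(\mathbf{x})-\mu f_1(\mathbf{x})\leq 0$ with $\mu,\nu\geq 0$; as written, $\mu f_1-\nu f_2\leq 0$ would yield $\beta f_1-f_2\leq 0$, the opposite of what you conclude, and your parenthetical ruling out $\nu=0$ only goes through with the corrected sign (then $\nu=0$ forces $\mu f_1(\mathbf{x})\geq 0$ everywhere, contradicting Slater unless $\mu=0$). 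Second, Toeplitz--Hausdorff gives convexity of the joint numerical range of the homogeneous forms $\mathbf{w}\mathbf{M}_i\mathbf{w}^H$ over $\mathbb{C}^{O+1}$, not directly of the affine slice $\mathcal{R}$; either cite the (true, but distinct) convexity theorem for the joint image of two inhomogeneous complex quadratic maps, or run the separation on the convex cone $\{(\mathbf{w}\mathbf{M}_1\mathbf{w}^H,\mathbf{w}\mathbf{M}_2\mathbf{w}^H):\mathbf{w}\in\mathbb{C}^{O+1}\}$, which is how the standard argument sidesteps the slice issue and also delivers $\beta\mathbf{M}_1-\mathbf{M}_2\succeq\mathbf{0}$ without the final rescaling step. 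With those repairs the argument is complete, and it adds rigor that the paper itself does not attempt.
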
 

Next, we have 
\begin{eqnarray}\label{eq37}
\begin{aligned}
	&\left|({\tilde{\bf{g}}}_m[k]\!+\!\triangle{{\bf{g}}}_m[k]){\bf{\Xi}}_m[k]{\bf{v}}_m[k]\right|^2\\
	=&\left(({\tilde{\bf{g}}}_m[k]\!+\!\triangle{{\bf{g}}}_m[k]){\bf{\Xi}}_m[k]{\bf{v}}_m[k]\right)\left(({\tilde{\bf{g}}}_m[k]\!+\!\triangle{{\bf{g}}}_m[k]){\bf{\Xi}}_m[k]{\bf{v}}_m[k]\right)^H\\
	=&\triangle{{\bf{g}}}_m[k]{\bf{\Xi}}_m[k]{\bf{V}}_m[k]{\bf{\Xi}}_m[k]^H\triangle{{\bf{g}}}_m[k]^H
	+2{\rm{Re}}({\tilde{\bf{g}}}_m[k]{\bf{\Xi}}_m[k]{\bf{V}}_m[k]{\bf{\Xi}}_m[k]^H\triangle{{\bf{g}}}_m[k]^H)\\
	&+{\tilde{\bf{g}}}_m[k]{\bf{\Xi}}_m[k]{\bf{V}}_m[k]{\bf{\Xi}}_m[k]^H{\tilde{\bf{g}}}_m[k]^H.
\end{aligned}
\end{eqnarray}
Similarly, $\sum_{j\neq m}\Upsilon_{m}[k]$ can be expressed as
\begin{eqnarray}\label{eq38}
\begin{aligned}
&\sum\nolimits_{j\neq m}\Upsilon_{m}[k]=\sum\nolimits_{j\neq m}\left|({\tilde{\bf{g}}}_m[k]\!+\!\triangle{{\bf{g}}}_m[k]){\bf{\Xi}}_m[k]{\bf{v}}_j[k]\right|^2\\
=&\triangle{{\bf{g}}}_m[k]{\bf{\Xi}}_m[k]\left(\sum\nolimits_{j\neq m}{\bf{V}}_j[k]\right){\bf{\Xi}}_m[k]^H\triangle{{\bf{g}}}_m[k]^H
+2{\rm{Re}}({\tilde{\bf{g}}}_m[k]{\bf{\Xi}}_m[k]\left(\sum\nolimits_{j\neq m}{\bf{V}}_j[k]\right){\bf{\Xi}}_m[k]^H\triangle{{\bf{g}}}_m[k]^H)\\
&+{\tilde{\bf{g}}}_m[k]{\bf{\Xi}}_m[k]\left(\sum\nolimits_{j\neq m}{\bf{V}}_j[k]\right){\bf{\Xi}}_m[k]^H{\tilde{\bf{g}}}_m[k]^H.
\end{aligned}
\end{eqnarray}
Next, we transform (\ref{OptK1}) into the following three constraints
\begin{subequations}
	\begin{align}
	\;\;\;\;\;\;\;\;\;\;\;\;\;\;\;\;\;\;\;\;\;\;\;\;\;\;\;\;\;\;\;\;\;\;\;\;\;\;\;&\mu_{m,k}\leq \left|({\tilde{\bf{g}}}_m[k]\!+\!\triangle{{\bf{g}}}_m[k]){\bf{\Xi}}_m[k]{\bf{v}}_m[k]\right|^2,\label{eq39A}\\
	&\tau_{m,k}\theta_{m,k}\leq \mu_{m,k},\label{eq39B}\\
	&\theta_{m,k}\geq \sum\nolimits_{j\neq m}^{M}\Upsilon_{m}[k]\!+\!BN_0/K. \label{eq39C}
	\end{align}
\end{subequations}	

Combining (\ref{eq37}), (\ref{OptK2}) and Lemma~\ref{lemma1}, (\ref{eq39A}) can be transformed into the following convex LMI constraint
\begin{eqnarray}\label{eq40}
\left[ \begin{array}{ccc}
\beta_{m,k}{\bf{I}}\!+\!\hat{{\bf{V}}}_m[k] & ({\tilde{\bf{g}}}_m[k]\hat{{\bf{V}}}_m[k])^H \\
{\tilde{\bf{g}}}_m[k]\hat{{\bf{V}}}_m[k] & \varepsilon\!+\!{\tilde{\bf{g}}}_m[k]\hat{{\bf{V}}}_m[k]{\tilde{\bf{g}}}_m[k]^H-\mu_{m,k}
\end{array} 
\right ]\!\succeq\! {\bf{0}},
\end{eqnarray}  
where $\hat{{\bf{V}}}_m[k]={\bf{\Xi}}_m[k]{\bf{V}}_m[k]{\bf{\Xi}}_m[k]^H$. The non-convex constraint (\ref{eq39B}) can be expressed as the following convex constraint
\begin{eqnarray}\label{eq41}
\frac{\tau_{m,k}^{(n)}}{2\theta_{m,k}^{(n)}}\theta_{m,k}^2+\frac{\theta_{m,k}^{(n)}}{2\tau_{m,k}^{(n)}}\tau_{m,k}^2\leq \mu_{m,k},
\end{eqnarray}
where $\tau_{m,k}^{(n)}$ and $\theta_{m,k}^{(n)}$ are the values of $\tau_{m,k}$ and $\theta_{m,k}$ at the $n$th iteration, respectively. In addition, combining~(\ref{eq38}), (\ref{OptK2}) and Lemma~\ref{lemma1}, (\ref{eq39C}) can be transformed into the following convex LMI constraint
\begin{eqnarray}\label{eq42}
\left[ \begin{array}{ccc}
\beta_{m,k}{\bf{I}}\!-\!\check{{\bf{V}}}_m[k] & -({\tilde{\bf{g}}}_m[k]\check{{\bf{V}}}_m[k])^H \\
-{\tilde{\bf{g}}}_m[k]\check{{\bf{V}}}_m[k] & \varepsilon\!-\!{\tilde{\bf{g}}}_m[k]\check{{\bf{V}}}_m[k]{\tilde{\bf{g}}}_m[k]^H+\theta_{m,k}-BN_0/K
\end{array} 
\right ]\!\succeq\! {\bf{0}},
\end{eqnarray}  
where $\check{{\bf{V}}}_m[k]={\bf{\Xi}}_m[k]\left(\sum\nolimits_{j\neq m}{\bf{V}}_j[k]\right){\bf{\Xi}}_m[k]^H$.

Finally, we transform (\ref{OptM}) into the following SDP problem
\begin{subequations}\label{OptU}
	\begin{align}
	\;\;\;\;\;\;\;\;\;\;\;\;\;\;\;\;\;\;\;\;\;\;\;\;\;\;\;\;\;\;\;\;\;\;\;\;\;&\underset{\left\{\tau_{m,k},\mu_{m,k},\theta_{m,k},\left\{{\bf{V}}_m[k]\right\}\right\}}{\rm{max}}\;\sum_{m=1}^{M}\sum_{k=1}^{K}a_m\log\left(1\!+\tau_{m,k}\right) \label{OptU0}\\
	{\rm{s.t.}}	&\;{\rm{(\ref{eq40}),(\ref{eq41}),(\ref{eq42})}},\label{OptU1}\\
		&\;\sum\nolimits_{M=1}^M\sum\nolimits_{k=1}^K{\rm{Tr}}({\bf{F}}^H{\bf{F}}{\bf{V}}_m[k])\leq P_{\rm{max}},\label{OptU2}\\
	&\;{\rm{rank}}({\bf{V}}_m[k])=1, {\bf{V}}_m[k]\succeq 0. \label{OptU3}
	\end{align}
\end{subequations} 

It is obvious that (\ref{OptU}) can be solved with the convex optimization toolbox such as CVX by removing the rank-one constraint. Likewise, the optimal digital beamforming matrix ${\bf{V}}_m[k]$ is obtained by iteratively solving problem (\ref{OptU}).

To handle problem (\ref{OptU}), we remove the rank-one constraint ${\rm{rank}}({\bf{V}}_m[k])=1$.  If problem (\ref{OptU}) does not yield a rank-one solution, namely ${\rm{rank}}({\bf{V}}_m[k])>1$, the Gaussian randomization technique is  used to obtain a rank-one solution~\cite{Luo_IEEESPM_2010}.

Now, we analyze the computational complexity for solving problem (\ref{OptU}). Let $\epsilon$ be the iteration accuracy; then, the number of iterations is on the order of $\sqrt{3KMN_{\rm{RF}}+7KM}\ln(1/\epsilon)$\cite{Chi_IEEETSP_2014}. For problem~(\ref{OptU}), there are $2KM$ LMI constraints of size $N_{\rm{IRS}}+1$, $KM$ LMI constraints of size $N_{\rm{IRS}}$, $3KM+1$ LMI constraints of size 1, and $KM$ second-order cone constraints. Therefore, the complexity of solving problem (\ref{OptU}) is on the order of 
$\sqrt{\Delta_1}\zeta\ln(1/\epsilon)(\Delta_2+\zeta\Delta_3+\zeta^2)$,
where $\Delta_1=3KMN_{\rm{RF}}+7KM$, $\Delta_2=KM(2(N_{\rm{RF}}+1)^3+N_{\rm{RF}}^3+11)+1$, $\Delta_3=2KM(N_{\rm{RF}}+1)^2+KMN_{\rm{RF}}^2+3KM+1$, and $\zeta=KMN_{\rm{RF}}^2$.
\subsection{Optimization of $\bf{\Phi}$ under Fixed ${\bf{F}}$ and ${\bf{V}}_m[k]$}
After obtaining the  analog beamforming ${\bf{F}}$ and digital beamforming ${\bf{v}}_m[k]$, we reconsider the reflection matrix $\bf{\Phi}$ with imperfect CSIs. According to the definition in Section~\ref{B}, we have
\begin{eqnarray}
	u_m[k]{{\bf{g}}}_m[k]{\bf\Phi}{{\bf{H}}}[k]{\bf{F}}{\bf{v}}_m[k]\triangleq {{\bf{g}}}_m[k]{\bf\Phi}{{\bf{z}}}_m[k].
\end{eqnarray}

Next, we define ${{\bf{Z}}}_m[k]\triangleq{\rm{diag}}({{\bf{z}}}_m[k])$, and thus, we have ${{\bf{g}}}_m[k]{\bf\Phi}{{\bf{z}}}_m[k]={{\bf{g}}}_m[k]{{\bf{Z}}}_m[k]{\bm{\phi}}$. After that, the achievable rate can be expressed as
\begin{eqnarray}\label{eq45}
R_m[k]=\frac{B}{K}\log\left(1\!+\!\frac{\left|({\tilde{\bf{g}}}_m[k]\!+\!\triangle{{\bf{g}}}_m[k]){{\bf{Z}}}_m[k]{\bm{\phi}}\right|^2}{\sum\nolimits_{j\neq m}^{M}\Upsilon'_{m}[k]\!+\!BN_0/K}\right),
\end{eqnarray}
where $\Upsilon'_{m}[k]=\left|({\tilde{\bf{g}}}_m[k]\!+\!\triangle{{\bf{g}}}_m[k]){{\bf{Z}}}_j[k]{\bm{\phi}}\right|^2$. It can be found that (\ref{eq45}) has a similar expression with (\ref{eq33}), and thus, the same scheme can be used to solve the reflection matrix $\bf{\Phi}$. Therefore, we omit the details and directly formulate the following SDR problem as
\begin{subequations}\label{OptY}
	\begin{align}
	\;\;\;\;\;\;\;\;\;\;\;\;\;\;\;\;\;\;\;\;\;\;\;\;\;\;\;\;\;\;\;\;&\underset{\left\{\tau'_{m,k},\mu'_{m,k},\theta'_{m,k},\bf{\Omega}\right\}}{\rm{max}}\;\sum_{m=1}^{M}\sum_{k=1}^{K}a_m\log\left(1\!+\tau'_{m,k}\right) \label{OptY0}\\
	{\rm{s.t.}}	&\;\left[ \begin{array}{ccc}
	\beta'_{m,k}{\bf{I}}\!+\!\hat{{\bf{\Omega}}}_m[k] & ({\tilde{\bf{g}}}_m[k]\hat{{\bf{\Omega}}}_m[k])^H \\
	{\tilde{\bf{g}}}_m[k]\hat{{\bf{\Omega}}}_m[k] & \varepsilon\!+\!{\tilde{\bf{g}}}_m[k]\hat{{\bf{\Omega}}}_m[k]{\tilde{\bf{g}}}_m[k]^H-\mu'_{m,k}
	\end{array} 
	\right ]\!\succeq\! {\bf{0}},\label{OptY1}\\
	&\frac{{\tau'}_{m,k}^{(n)}}{2{\theta'}_{m,k}^{(n)}}{\theta'}_{m,k}^2+\frac{{\theta'}_{m,k}^{(n)}}{2{\tau'}_{m,k}^{(n)}}{\tau'}_{m,k}^2\leq \mu'_{m,k},\label{OptY2}\\
	&\left[ \begin{array}{ccc}
	\beta'_{m,k}{\bf{I}}\!-\!\check{{\bf{\Omega}}}_m[k] & -({\tilde{\bf{g}}}_m[k]\check{{\bf{\Omega}}}_m[k])^H \\
	-{\tilde{\bf{g}}}_m[k]\check{{\bf{\Omega}}}_m[k] & \varepsilon\!-\!{\tilde{\bf{g}}}_m[k]\check{{\bf{\Omega}}}_m[k]{\tilde{\bf{g}}}_m[k]^H+\theta'_{m,k}-BN_0/K
	\end{array} 
	\right ]\!\succeq\! {\bf{0}},\\
	&\;{\bf{\Omega}}(i,i)=1, {\bf{\Omega}}\succeq 0, \label{OptY4}
	\end{align}
\end{subequations} 
where $\tau_{m,k}$, $\mu'_{m,k}$ and $\theta'_{m,k}$ are the introduced auxiliary variables,  $\hat{{\bf{\Omega}}}_m[k]={\bf{Z}}_m[k]{\bf{\Omega}}{\bf{Z}}_m[k]^H$, and $\check{{\bf{\Omega}}}_m[k]=\sum\nolimits_{j\neq m}{\bf{\Xi}}_j[k]{\bf{\Omega}}{\bf{\Xi}}_j[k]^H$. We can use the CVX toolbox to solve (\ref{OptY}). Likewise, if problem (\ref{OptL}) does not yield a rank-one solution, namely ${\rm{rank}}({\bf{\Omega}}^\star)\neq 1$. The Gaussian randomization technique can be  used to obtain a rank-one solution~\cite{Luo_IEEESPM_2010}. 

Now, we analyze the computational complexity for solving problem (\ref{OptY}). Let $\epsilon$ be the iteration accuracy; then, the number of iterations is on the order of $\sqrt{KMN_{\rm{IRS}}+7KM}\ln(1/\epsilon)$. For problem~(\ref{OptY}), there are $2KM$ LMI constraints of size $N_{\rm{IRS}}+1$,  1 LMI constraint of size $N_{\rm{IRS}}$, $4KM$ LMI constraints of size $1$, and $KM$ second-order cone constraints. Therefore, the complexity of solving problem (\ref{OptY}) is on the order of 
$\sqrt{\Delta_1}\zeta\ln[1/\epsilon)(\Delta_2+12KM+1+\zeta\Delta_2+\zeta^2]$,
where $\Delta_1=KMN_{\rm{IRS}}+7KM$, $\Delta_2=2KM(N_{\rm{IRS}}+1)^3+N_{\rm{IRS}}^3+13KM$, $\Delta_3=2KM(N_{\rm{RF}}+1)^2+N_{\rm{RF}}^2+4KM$, and $\zeta=N_{IRS}^2$.
\section{Numerical Results}
In this section, simulation results are presented to evaluate the performance of the proposed schemes in IRS-aided THz MIMO-OFDMA systems. Due to the severe pathloss in THz, we consider a short distance communication scenario as shown in Fig.~\ref{Systemfigure2}, where  users are located within a circle with $1.5$ m radius. The AoD/AOA follows the uniform distribution within $[-\pi/2,\pi/2]$, and the antenna spacing is assumed to  be  half wavelength.  The default simulation parameters are listed in Table~\ref{Table I}, and they are used in simulation unless otherwise specified.

\begin{figure}[t]
	\begin{center}
		\includegraphics[width=10cm,height=6cm]{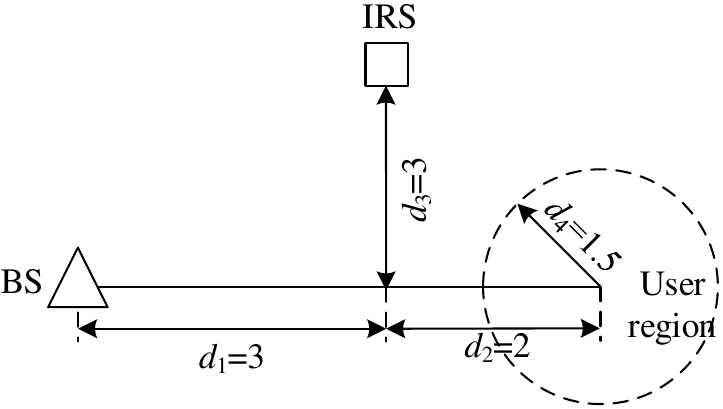}
		\caption{The location distribution in the IRS-aided system.}
		\label{Systemfigure2}
	\end{center}
\end{figure}  
\begin{table} [t]
	\caption{Default Parameters Used in Simulations.} 
	\renewcommand{\arraystretch}{0.8}
	\label{Table I} 
	\centering
	\begin{tabular}{l|r} 
		\hline  
		\bfseries Parameters & \bfseries Value \\ [0.5ex] 
		\hline\hline
		Number of antennas  &$N_{\rm{TX}}=64$ \\
		\hline
		Number of RF chains  &$N_{\rm{RF}}=4$ \\
		\hline
		Number of reflection elements &$N_{\rm{IRS}}=4$\\
		\hline
		Central frequency & $f_c=340$ [GHz]\\
		\hline
		Bandwidth &$B=20$ [GHz]\\
		\hline
		Number of subcarriers & $K=16$\\
		\hline
		Number of users & $M=2$ \\
		\hline
		Transmit antenna gain & $G_t=4\!+\!20\log_{10}(\sqrt{N_{\rm{TX}}})$  \\
		\hline
		Receive antenna gain & $G_r=1$\\
		\hline  
		Absorption coefficient & $0.0033/m$\\
		\hline
		Speed of light & $c=3\times 10^8$ \\ 
	    \hline
		Noise variance& $N_0=-174$ [dBm/Hz] \\
		\hline 
	\end{tabular}
\end{table}
\begin{figure}[t]
	\begin{center}
		\includegraphics[width=9cm,height=7cm]{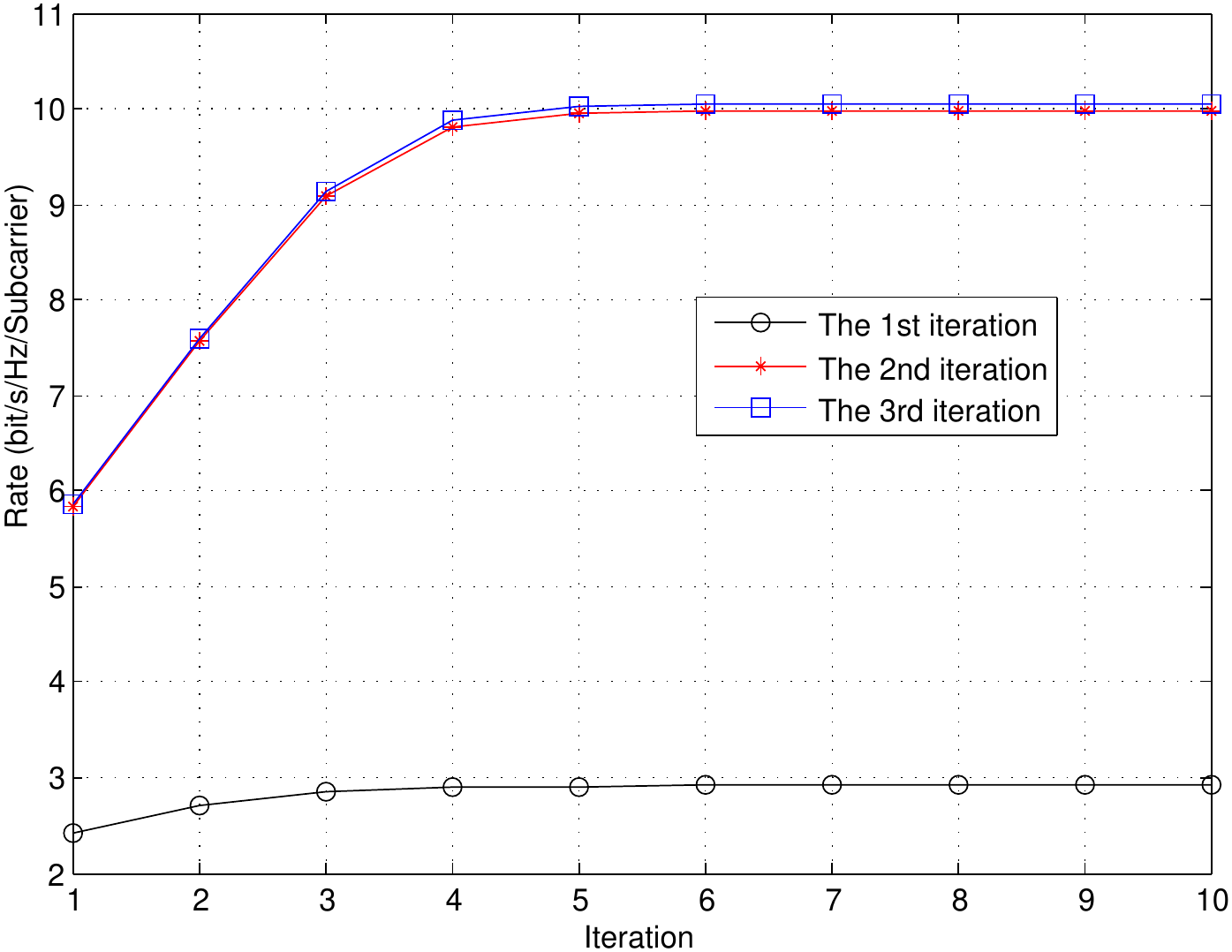}
		\caption{The rate versus iteration for solving the digital beamforming.}
		\label{figure1}
	\end{center}
\end{figure}
\begin{figure}[t]
	\begin{center}
		\includegraphics[width=9cm,height=7cm]{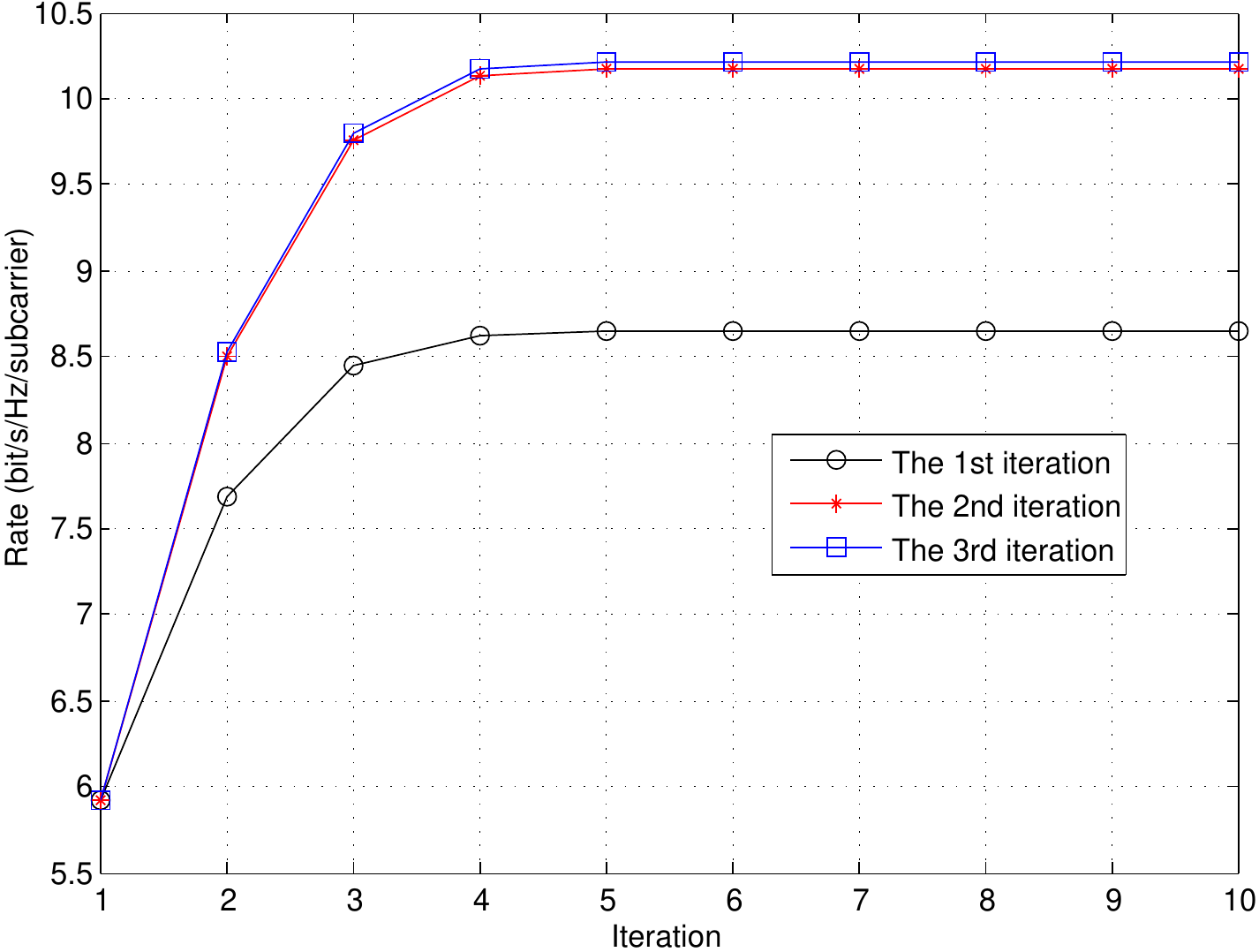}
		\caption{The rate versus iteration for solving the reflection matrix.}
		\label{figure2}
	\end{center}
\end{figure}
\begin{figure}[t]
	\begin{center}
		\includegraphics[width=9cm,height=7cm]{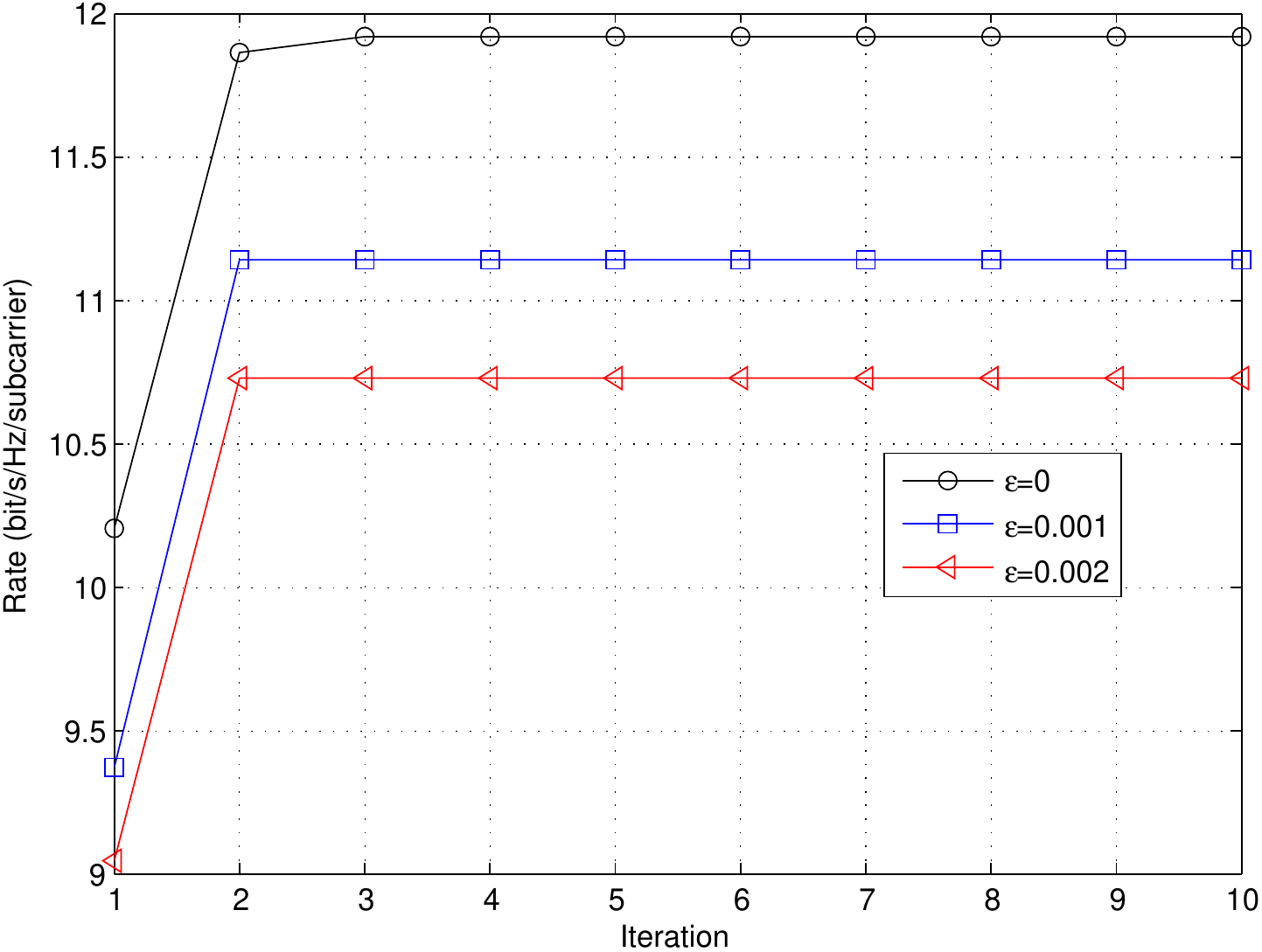}
		\caption{The rate versus iteration for the proposed Algorithm 1.}
		\label{figure3}
	\end{center}
\end{figure}
\begin{figure}[t]
	\begin{center}
		\includegraphics[width=9cm,height=7cm]{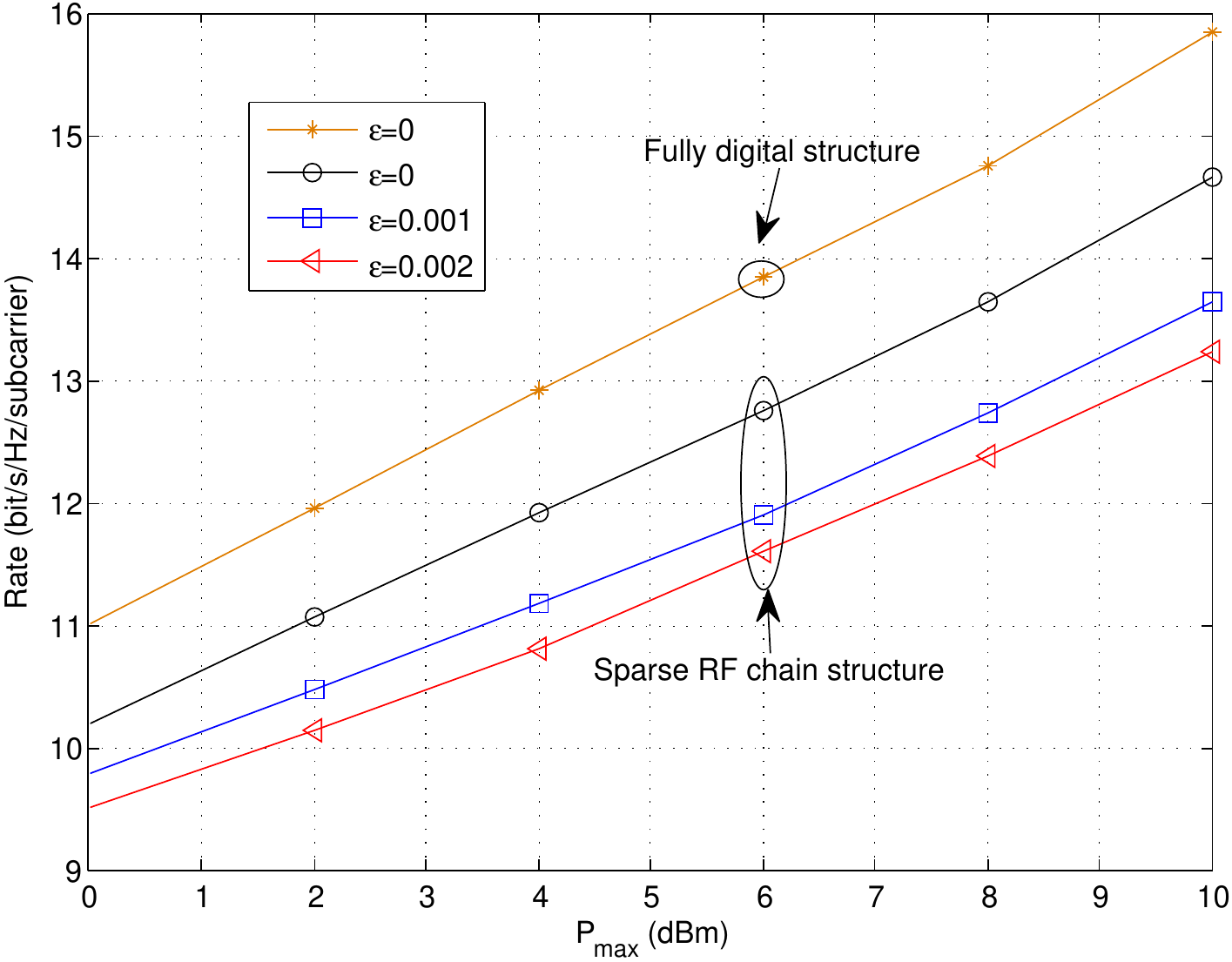}
		\caption{The rate versus the allowable maximum transmit power.}
		\label{figure4}
	\end{center}
\end{figure}
\begin{figure}[t]
	\begin{center}
		\includegraphics[width=9cm,height=7cm]{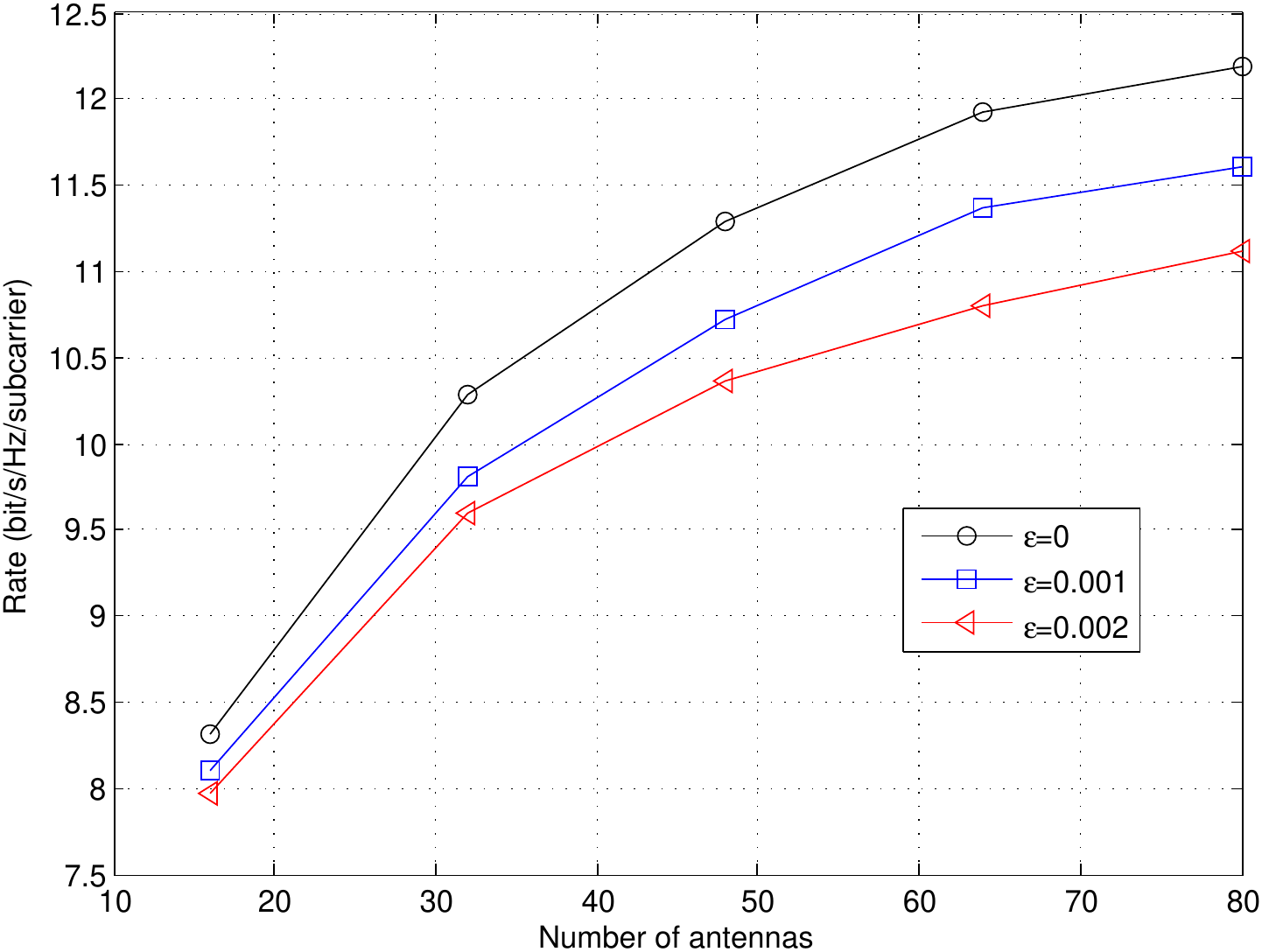}
		\caption{The rate versus the number of antennas.}
		\label{figure5}
	\end{center}
\end{figure}
\begin{figure}[t]
	\begin{center}
		\includegraphics[width=9cm,height=7cm]{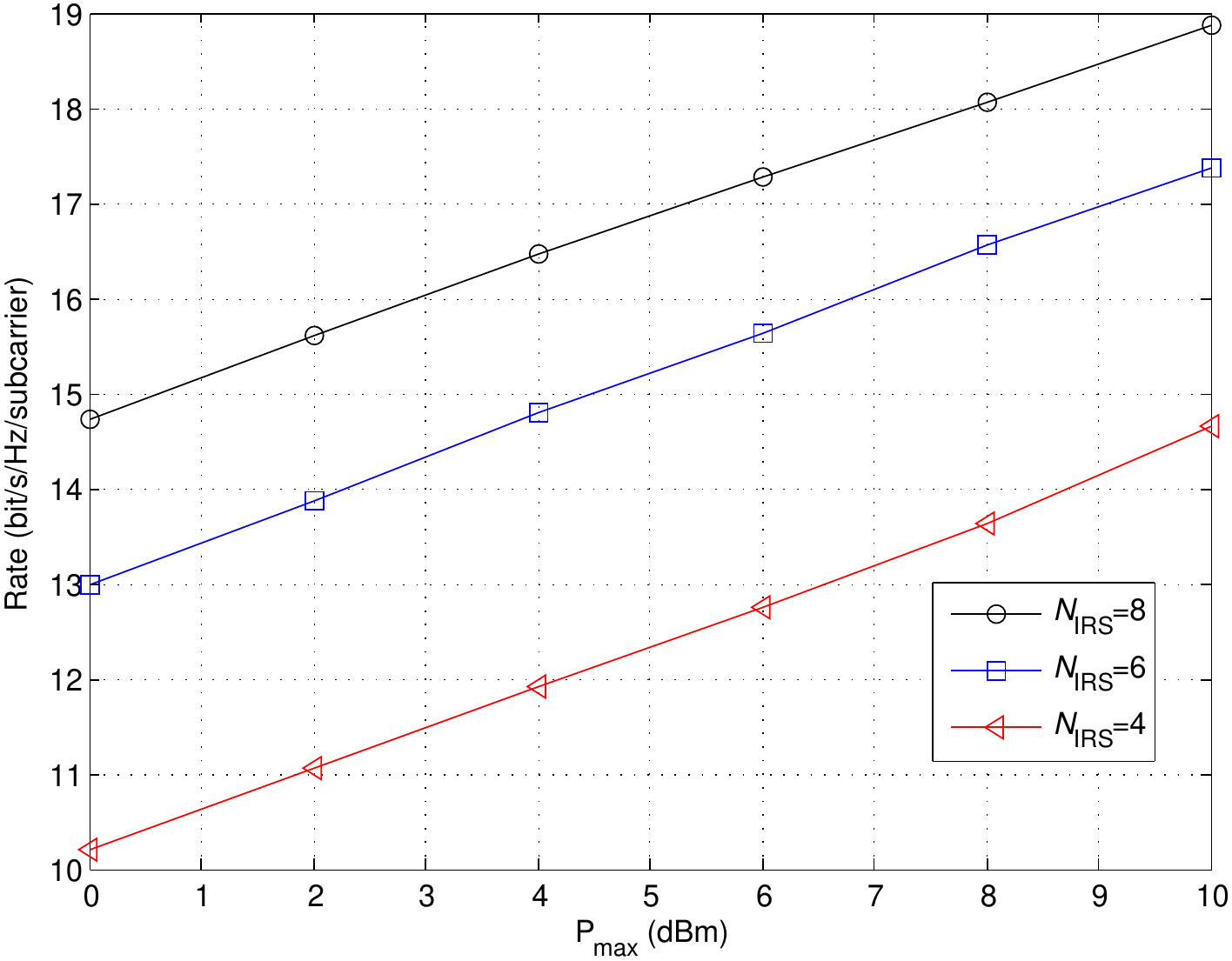}
		\caption{The rate versus the allowable maximum transmit power under different numbers of  IRS reflection elements.}
		\label{figure6}
	\end{center}
\end{figure}
\begin{figure}[t]
	\begin{center}
		\includegraphics[width=9cm,height=7cm]{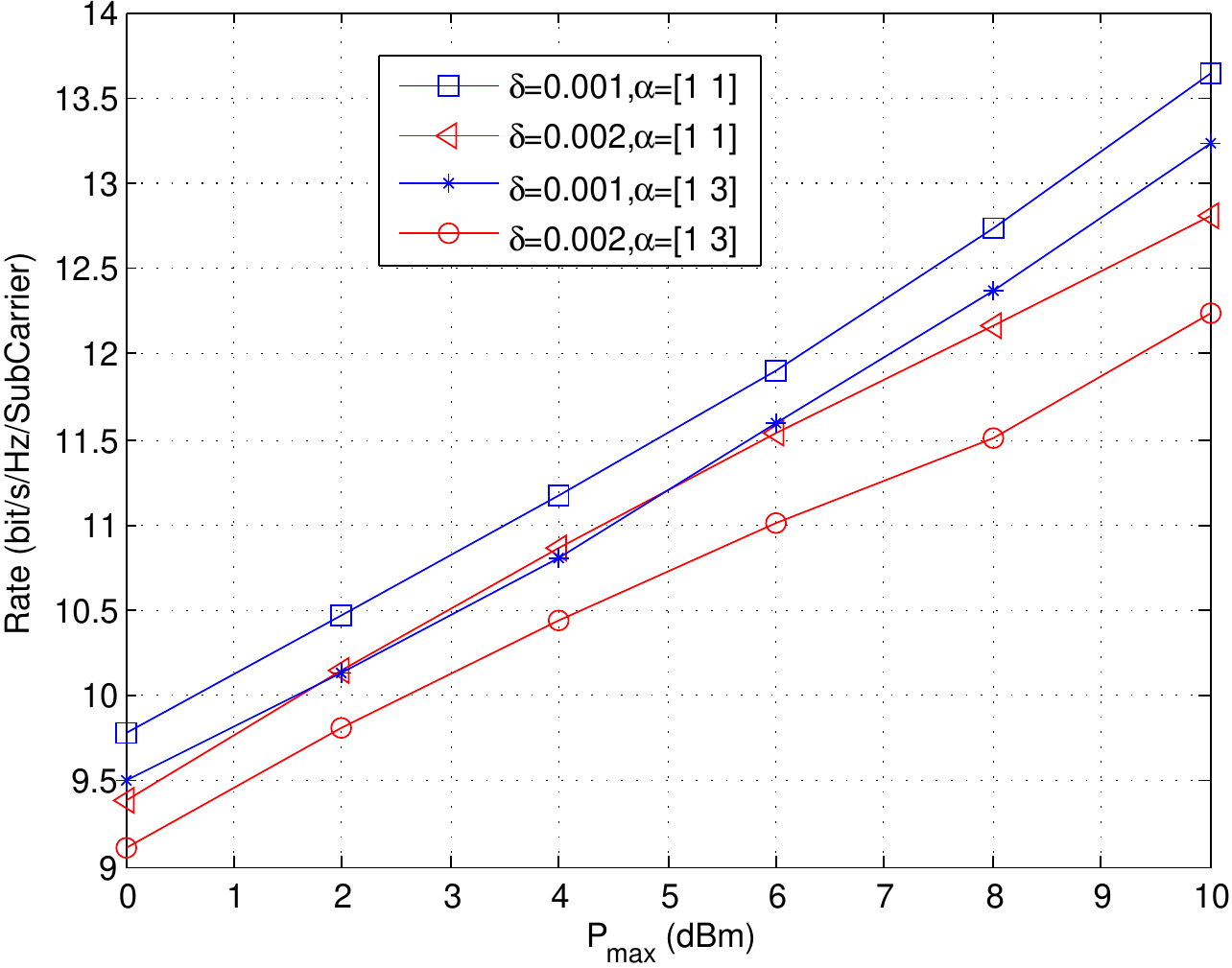}
		\caption{The rate versus the allowable maximum transmit power with different weights.}
		\label{figure7}
	\end{center}
\end{figure}
\begin{figure}[t]
	\begin{center}
		\includegraphics[width=9cm,height=7cm]{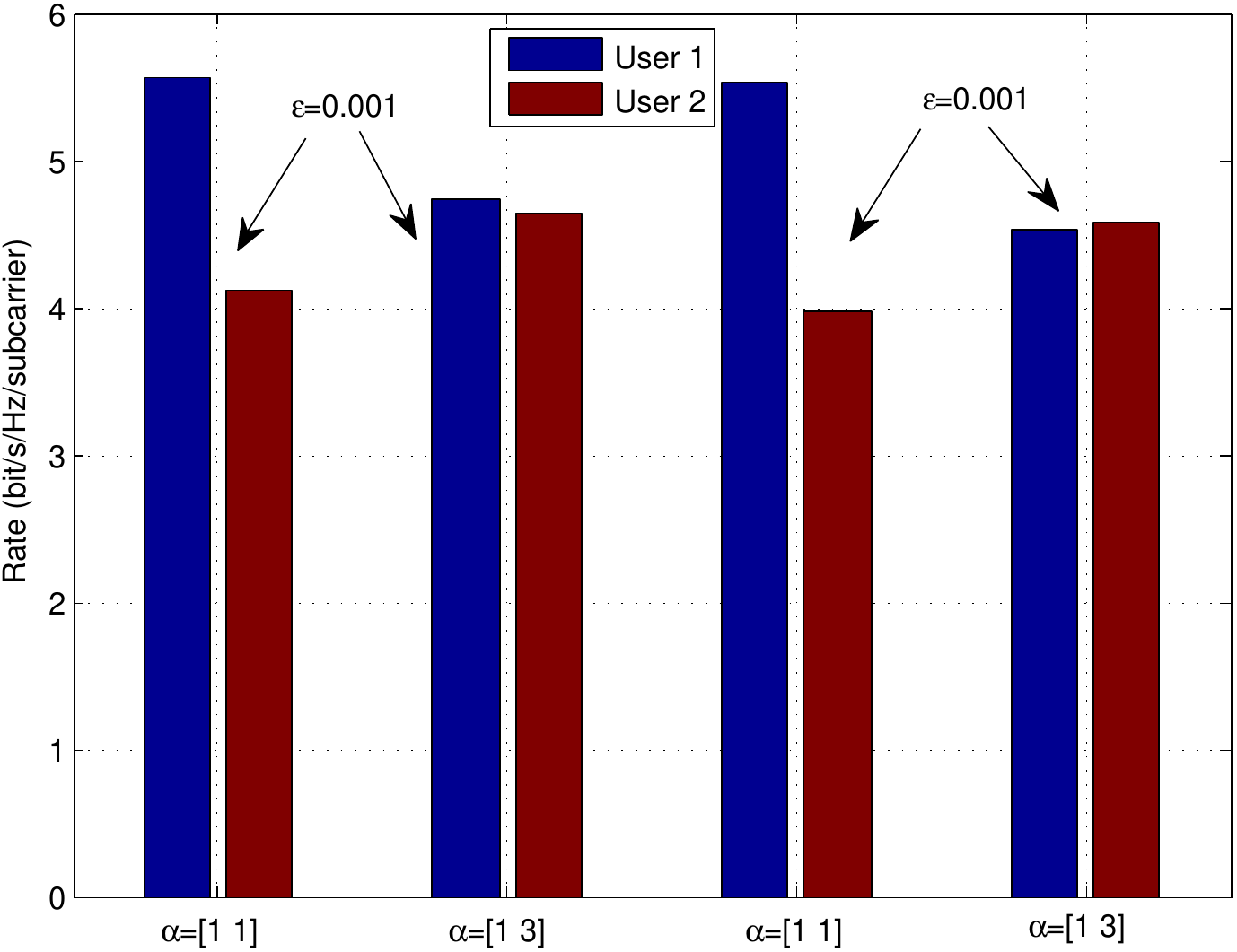}
		\caption{Rate comparison under different weights.}
		\label{figure8}
	\end{center}
\end{figure}

Figs.~\ref{figure1} and~\ref{figure2} show the convergence performance of the proposed inner iterative algorithm for solving the digital beamforming and reflection matrix, respectively, i.e., Line $5\sim 9$ and Line $11\sim 15$ in Algorithm~\ref{algorithm1}. Here, we set $\varepsilon=0$, $P_{\rm{max}}=0$ dBm and $\alpha_m=1\;(m\in\{1,2\})$. The legend ``$n$th iteration'' in Figs.~\ref{figure1} and~\ref{figure2} stands for the outer iteration number. One can observe that the inner iterative algorithm tends to converge after 5 iterations for each outer iteration. In addition, it can be found  that the gap  is small between the $2$nd and $3$rd iterations, but large between the $1$st and $2$nd iteration. This means that outer iterative loop (i.e., Algorithm~{\ref{algorithm1}}) also converges rapidly, and we will elaborate this details later in Fig.~\ref{figure3}. 

The convergence performance of Algorithm~\ref{algorithm1} under different estimation errors is plotted in Fig.~\ref{figure3}, where we set the maximum transmit power $P_{\rm{max}}=4$ dBm and $\alpha_m=1\;(m\in\{1,2\})$. It is clear that the rate tends to stabilize after 3 iterations, which demonstrates the fast  convergence of the proposed algorithm. In addition, it is easy to understand that the rate is low for a large estimation error as shown in Fig.~\ref{figure3}, where $\varepsilon=0$ means perfect CSIs between IRS and users.

Fig.~\ref{figure4} shows the rate versus $P_{\rm{max}}$ under different estimation errors, where we set $\alpha_m=1\;(m\in\{1,2\})$. Meanwhile, we also plot the rate under fully the digital structure, namely each antenna is connected to each RF chain. It is clear that the rate under the fully digital structure is higher than that under the sparse RF chain structure for the same condition, while the circuit power consumption is very high for the former. This is also one of the reasons for which the sparse RF chain structure is usually adopted when the ultra high frequency carrier is applied. In addition, one can observe that the rate increases with $P_{\rm{max}}$. 

We plot the rate under different BS antennas in Fig.~\ref{figure5}, where we set $P_{\rm{max}}=4$ dBm and $\alpha_m=1\;(m\in\{1,2\})$. It is obvious that the rate increases with the number of  antennas, but with a decreasing slope. In addition, we also plot the rate versus $P_{\rm{max}}$ for different numbers of IRS reflection elements in Fig.~\ref{figure6}, with $\varepsilon=0$. It is observed again that a large number of IRS reflection elements leads to a higher rate. This is because a higher beamforming gain can be achieved when there are either more antennas at the BS or more reflection elements at the IRS. 

To compare the system performance under different users' weights, we set the user weight $\alpha=[1\;\;1]$ and  $\alpha=[1\;\; 3]$ as shown in Figs.~\ref{figure7} and~\ref{figure8}. $\alpha=[1\;\;1]$ means that user~1 and user~2 have the same weight, whereas $\alpha=[1\;\;3]$ means that user~2 has a high priority than user~1. One can observe that the rate with $\alpha=[1\;\;3]$ is always lower  than that with $\alpha=[1\;\;1]$. Furthermore, we find that the rate of user~1 is higher than that of user~2 with $\alpha=[1\;\;1]$, while their rates are almost the same  for $\alpha=[1\;\;3]$. This means when $\alpha=[1\;\;1]$, all users have the same priority, and the system always optimizes the power or beamforming to maximize the sum rate. However, when user~2 has higher priority, the system allocates more power to user~2. In this case, although it can improve the rate of user~2, both the rate of user~1 and the sum rate are reduced. In practice, we can set different weights according to different quality of service requirements of the users. 

\section{Conclusion}
In this paper, we have considered an IRS-aided THz MIMO-OFDMA system, where the BS is equipped with a sparse RF chain structure. First, we have proposed a joint hybrid analog/digital beamforming and reflection matrix design to maximize the weighted sum rate under perfect CSIs. Next, considering the imperfect CSIs from the IRS to users, we have redesigned a robust joint optimization algorithm. From simulation results, we have found that the channel estimation error has a large impact on the system sum rate. In addition, one user with a high weight can improve this user's rate, while the system sum rate decreases.  Consequently, channel estimation schemes and users weight selection are important considerations for the design of practical systems.  
\appendices
\section{Proof of Theorem~\ref{theorem1}}
First, we give the Lagrangian function of (\ref{OptH}) without the rank-one constraint as
\begin{eqnarray}
\begin{aligned}
	&F(t_{m,k}, b_{m,k},\xi,\psi_{m,k},\nu_{m,k},{\bf{V}}_m[k])=\sum_{m=1}^{M}\sum_{k=1}^{K}a_m\log\left(1\!+\!t_{m,k}\right)\\
	+&\xi\left(P_{\rm{max}}-\sum\nolimits_{M=1}^M\sum\nolimits_{k=1}^K{\rm{Tr}}({\bf{F}}^H{\bf{F}}{\bf{V}}_m[k])\right)\\
	+&\sum_{m=1}^{M}\sum_{k=1}^{K}\psi_{m,k}\left(b_{m,k}-\left(\sum\nolimits_{j\neq m}^{M}{\rm{Tr}}({\bar{\bf{H}}}_m[k]{\bf{V}}_j[k])\!+\!\delta^2\right)\right)\\
	+&\sum_{m=1}^{M}\sum_{k=1}^{K}\nu_{m,k}\left( {\rm{Tr}}({\bar{\bf{H}}}_m[k]{\bf{V}}_m[k])-\left(\frac{t_{m,k}^{(n)}}{2b_{m,k}^{(n)}}b_{m,k}^2+\frac{b_{m,k}^{(n)}}{2t_{m,k}^{(n)}}t_{m,k}^2\right)\right)\\
	+&\sum_{m=1}^{M}\sum_{k=1}^{K}{\rm{Tr}}(\Theta_{m,k}{\bf{V}}_m[k]).
\end{aligned}
\end{eqnarray}
where $\xi,\psi_{m,k},\nu_{m,k}$, $\Theta_{m,k}$, respectively, represent the Lagrange multipliers corresponding the constraints (\ref{OptE2}), (\ref{eq23B}), (\ref{eq25}) and (\ref{OptH3}).  Since the relaxed SDP problem (\ref{OptH}) is convex, and the gap between the primal problem and its dual problem is zero, namely it satisfies the Slater's condition~\cite{convex}. Therefore, the Karaush-Kuhn-Tucker (KKT) conditions are necessary and sufficient for the optimal solutions of problem (\ref{OptH}) with rank-one constraint. Next, we give the KKT conditions related to the optimal digital beamforming ${\bf{V}}_m[k]^\star$ as:
\begin{subequations}
	\begin{eqnarray}
		\xi^\star {\bf{F}}^H{\bf{F}}+\sum_{j\neq m}^{M}\sum_{k=1}^{K}\psi_{m,k}^\star{\bar{\bf{H}}}_j[k]-\nu_{m,k}^\star{\bar{\bf{H}}}_m[k]&={\bf{\Theta}}_{m,k}^\star,\\
		{\bf{\Theta}}_{m,k}^\star{\bf{V}}_m[k]^\star&={\bf{0}},\label{App2}\\
		{\bf{\Theta}}_{m,k}^\star&\succeq {\bf{0}},
	\end{eqnarray}
\end{subequations}
where $\xi^\star$, $\psi_{m,k}^\star$, $\nu_{m,k}^\star$ and ${\bf{\Theta}}_{m,k}^\star$ are the optimal Lagrange multipliers.  The analog beamforming can be expressed as ${\bf{F}}=[{\bf{f}}_1,\cdots,{\bf{f}}_{\rm{RF}}]$, and we have ${\bf{f}}_i^H{\bf{f}}_i=1$ and ${\bf{f}}_i^H{\bf{f}}_j\ll 1\;(i\neq j)$  for a large $N_{\rm{TX}}$ as our analysis for the analog beamforming matrix in~Section~\ref{A1}.  In this way, for a large $N_{\rm{TX}}$, we can obtain ${\bf{F}}^H{\bf{F}}\approx {\bf{I}}$, and ${\bf{F}}^H{\bf{F}}$ is a full rank matrix, namely ${\rm{rank}}({\bf{F}}^H{\bf{F}})=N_{\rm{RF}}$. Because  $\xi^\star>0$, $\psi_{m,k}^\star>0$, we define 
\begin{eqnarray}
	{\bf{Y}}=\xi^\star {\bf{F}}^H{\bf{F}}+\sum_{j\neq m}^{M}\sum_{k=1}^{K}\psi_{m,k}^\star{\bar{\bf{H}}}_j[k],
\end{eqnarray}
and thus, ${\bf{Y}}$ is a positive-definite matrix which has full rank with ${\rm{rank}}({\bf{Y}})=N_{\rm{RF}}$. Based on this, we have 
\begin{eqnarray}
\begin{aligned}
  {\rm{rank}}({\bf{\Theta}}_{m,k}^\star)=&{\rm{rank}}({\bf{Y}}-\nu_{m,k}^\star{\bar{\bf{H}}}_m[k])\\
  \geq&{\rm{rank}}({\bf{Y}})-{\rm{rank}}(\nu_{m,k}^\star{\bar{\bf{h}}}_m[k]^H{\bar{\bf{h}}}_m[k])\\
  \geq&N_{\rm{RF}}-1.
  \end{aligned}
\end{eqnarray}

Therefore, we can claim that the rank of ${\bf{\Theta}}_{m,k}^\star$ is either $N_{\rm{RF}}$ or $N_{\rm{RF}}-1$. If ${\rm{rank}}({\bf{\Theta}}_{m,k}^\star)=N_{\rm{RF}}$, according to (\ref{App2}), the optimal ${\bf{V}}_m[k]^\star={\bf{0}}$, which means that the BS does not transmit any signal.  Thus, we have ${\rm{rank}}({\bf{\Theta}}_{m,k}^\star)=N_{\rm{RF}}-1$, and the null space of ${\bf{\Theta}}_{m,k}^\star$ is one dimensional. Meanwhile,  (\ref{App2}) means that ${\bf{V}}_m[k]^\star$ must lie in the null-space of ${\bf{\Theta}}_{m,k}^\star$, and we have ${\rm{rank}}({\bf{V}}_m[k]^\star)=1$ and the proof is completed. 

\bibliographystyle{IEEEtran}
\balance
\bibliography{references}

\end{document}